\documentclass[11pt,a4paper]{article}

\usepackage[T1]{fontenc}
\usepackage[utf8]{inputenc}
\usepackage[a4paper, margin=2.5cm]{geometry}
\PassOptionsToPackage{hyphens}{url}

\usepackage{amsmath,amssymb,amsfonts}
\usepackage{amsthm}
\usepackage{graphicx}
\usepackage{booktabs}
\usepackage{multirow}
\usepackage{xcolor}
\usepackage{hyperref}
\hypersetup{colorlinks=true,linkcolor=blue!70!black,citecolor=blue!70!black,urlcolor=blue!70!black}
\usepackage{url}
\usepackage{microtype}
\usepackage{enumitem}
\usepackage{tikz}
\usetikzlibrary{shapes.geometric, arrows.meta, positioning, fit, backgrounds, calc}
\usepackage{pgfplots}
\pgfplotsset{compat=1.18}
\usepackage{tabularx}
\usepackage{array}
\usepackage{caption}
\usepackage{subcaption}
\usepackage{tcolorbox}
\tcbuselibrary{skins,breakable}
\usepackage[numbers,sort&compress]{natbib}
\usepackage{authblk}

\definecolor{alertred}{RGB}{200,40,40}
\definecolor{defblue}{RGB}{30,100,170}
\definecolor{defbg}{RGB}{235,245,255}
\definecolor{examplebg}{RGB}{255,248,235}
\definecolor{exampleborder}{RGB}{200,150,50}
\definecolor{gapbg}{RGB}{255,240,240}
\definecolor{gapborder}{RGB}{200,50,50}

\newtcolorbox{definitionbox}[1]{
  colback=defbg, colframe=defblue, fonttitle=\bfseries\small,
  title={Definition: #1}, left=4pt, right=4pt, top=3pt, bottom=3pt,
  boxrule=0.6pt, arc=3pt
}
\newtcolorbox{casebox}[1]{
  colback=examplebg, colframe=exampleborder, fonttitle=\bfseries\small,
  title={Case Study: #1}, left=4pt, right=4pt, top=3pt, bottom=3pt,
  boxrule=0.6pt, arc=3pt
}
\newtcolorbox{gapbox}[1]{
  colback=gapbg, colframe=gapborder, fonttitle=\bfseries\small,
  title={Research Gap \##1}, left=4pt, right=4pt, top=3pt, bottom=3pt,
  boxrule=0.6pt, arc=3pt
}
\definecolor{propbg}{RGB}{240,255,240}
\definecolor{propborder}{RGB}{40,140,40}
\newtcolorbox{propositionbox}[1]{
  colback=propbg, colframe=propborder, fonttitle=\bfseries\small,
  title={#1}, left=4pt, right=4pt, top=3pt, bottom=3pt,
  boxrule=0.6pt, arc=3pt
}

\definecolor{takebg}{RGB}{250,247,232}
\definecolor{takeborder}{RGB}{170,135,40}
\newtcolorbox{takeawaybox}[1]{
  colback=takebg, colframe=takeborder, fonttitle=\bfseries\small,
  title={Take-away #1}, left=4pt, right=4pt, top=3pt, bottom=3pt,
  boxrule=0.5pt, arc=2pt, breakable
}

\newtheorem{theorem}{Theorem}

\newcommand{\secref}[1]{Section~\ref{#1}}
\newcommand{\figref}[1]{Fig.~\ref{#1}}
\newcommand{\tabref}[1]{Table~\ref{#1}}
\newcommand{\lasm}{\textsc{LASM}}
\newcommand{\eg}{\emph{e.g.,}}
\newcommand{\ie}{\emph{i.e.,}}
\newcommand{\etal}{\emph{et al.}}
\newcommand{\tmtag}[4]{%
  \par\noindent\hspace*{0pt}%
  \begin{tcolorbox}[
    colback=gray!7, colframe=gray!50, boxrule=0.4pt, arc=2pt,
    left=4pt, right=4pt, top=2pt, bottom=2pt,
    enhanced, before skip=2pt, after skip=4pt
  ]
  {\scriptsize\itshape Threat model:}~{\scriptsize Position=\textbf{#1}; Knowledge=\textbf{#2}; Goal=\textbf{#3}; Persistence=\textbf{#4}.}
  \end{tcolorbox}
}

\begin{document}

\title{\textbf{A Systematic Survey of Security Threats and Defenses in LLM-Based AI Agents:\\
A Layered Attack Surface Framework}}

\author[]{Kexin Chu\thanks{Corresponding author: \texttt{kexin.chu@uconn.edu}}}
\affil[]{School of Computing, University of Connecticut}

\date{}


\maketitle

\begin{abstract}
Agentic AI systems plan across multiple sessions, retain memory, invoke external tools, and coordinate with peer agents. Stateless LLMs do none of these. Existing security taxonomies sort threats by attack type, such as prompt injection or jailbreaking, but say nothing about which architectural component is vulnerable or over what timescale the threat manifests. This paper addresses the structural questions directly. We propose the Layered Attack Surface Model (\lasm), a seven-layer decomposition of the agentic stack: Foundation, Cognitive, Memory, Tool Execution, Multi-Agent Coordination, Ecosystem, and Governance. A non-transferability theorem (sketch in §IV, full proof in supplementary) shows that a control at one layer has zero detection power against an attack localized at another. We add an orthogonal temporality axis with four classes: instantaneous (T1), session-persistent (T2), cross-session cumulative (T3), and sub-session-stack (T4). Plotting 116 papers (2021--2026) on the resulting 7$\times$4 grid produces four numbered findings. The under-studied zone $\{L_5,L_6,L_7\} \times \{T_3,T_4\}$ holds 6.3\% of all 144 paper-cell assignments yet contains the highest-severity threats (EF1). Every surveyed L4 attack reduces to one cause, principal trust inversion (EF2). Seven of 28 grid cells have zero defense coverage, and three of those seven contain documented attacks (EF3). No surveyed benchmark evaluates T3 or T4 threats, so progress on the right-hand columns is unmeasurable (EF4). Beyond these findings, we provide a cross-layer defense taxonomy with prescriptive recipes for seven canonical attack classes, and a dependency DAG that separates engineering-tractable gaps from fundamental open problems. The per-paper coding, a robustness analysis script for EF1 (perturbations keep the fraction below 9\%), and a reference ABOM JSON Schema with validator and 12 unit tests are released as supplementary material.
\end{abstract}

\noindent\textbf{Keywords:} AI agents; agentic systems; large language models; LLM security; prompt injection; multi-agent systems; supply-chain attacks; memory poisoning; alignment; trustworthy AI

\section{Introduction}
\label{sec:intro}

The agentic AI threat surface is not the LLM threat surface scaled up. A stateless LLM consumes one input and emits one output, so input/output filtering covers it. An AI agent maintains persistent state across sessions, formulates multi-step plans, invokes privileged tools such as web browsers, code interpreters, file systems and payment APIs~\cite{toolformer2023,toolllm2024,gorilla2024,shenHuggingGPT2023}, and increasingly coordinates with peer agents on long-horizon tasks~\cite{react2023,voyager2023,autogpt2023,agentsurvey2024,riseAgents2023}. The capability comes from emergent properties at scale~\cite{wei2022emergent} and from tool and memory infrastructure~\cite{nakano2022webgpt}. The security cost is that every architectural decision that makes an agent more capable than a chatbot enlarges the surface adversaries can exploit.

Three properties of this attack surface have no analogue for stateless LLMs. Attacks are \emph{emergent}: harm arises from the composition of authorized actions rather than from any single action. Attacks are \emph{compositional}: a defense placed at one component cannot detect an attack routed through another. Attacks are \emph{temporally extended}: a payload can be installed weeks before it executes, and may never execute as a discrete event. Three production incidents make the gap concrete.

\begin{itemize}[leftmargin=1.5em,noitemsep]
  \item An adversarially crafted document retrieved during a routine web search can silently corrupt an agent's long-term memory, influencing behavior in a completely unrelated session weeks later, with no detectable anomaly at either the injection or exploitation point.
  \item A compromised sub-agent in an enterprise multi-agent pipeline can propagate malicious intent through peer-to-peer trust relationships, compromising the entire orchestration network from a single point of entry~\cite{infectiousJailbreak2024}.
  \item A malicious MCP server, indistinguishable to the agent from a legitimate tool provider, can silently exfiltrate sensitive data via hidden instructions embedded in tool descriptions~\cite{mcp2025}.
\end{itemize}

These threats have no direct analogues in classical LLM safety or traditional software security. The security community has responded by extending existing frameworks rather than building new ones.

\paragraph{Research Questions.}
This survey is structured as a systematization of knowledge (SoK) and is organized around five research questions (RQs). The questions follow the SoK convention of stating in advance which knowledge gaps the systematization is intended to close.

\begin{itemize}[leftmargin=2em,noitemsep,label=]
\item \textbf{RQ1 (Architectural decomposition)}: Can the agentic attack surface be decomposed into a small number of architectural layers with disjoint trust boundaries and disjoint defense leverage points, such that defense placement decisions follow from the decomposition?
\item \textbf{RQ2 (Temporal structure)}: Do agentic attacks exhibit a temporal structure that is invisible to type-centric taxonomies, and if so, what is the minimal classification needed to capture this structure?
\item \textbf{RQ3 (Coverage gaps)}: Where in the (architecture, temporality) space is research effort concentrated, and where is it absent? Is severity correlated, uncorrelated, or inversely correlated with effort?
\item \textbf{RQ4 (Defense coverage)}: Does the defense literature cover the same (architecture, temporality) cells as the attack literature, and where does each cell have zero defense coverage?
\item \textbf{RQ5 (Operationalization)}: Are the gaps tractable as engineering problems, or do they require fundamental research?
\end{itemize}

These RQs are answered respectively by \secref{sec:framework} (RQ1, via \lasm); \secref{sec:framework}\,{\S}IV.B (RQ2, via temporality T1--T4); Empirical Findings~1 and~2 in \secref{sec:framework} and \secref{sec:tool} (RQ3, via the coverage heatmap and L4 root-cause analysis); Empirical Findings~3 and~4 in \secref{sec:defcov} and \secref{sec:eval} (RQ4, via the defense coverage heatmap and benchmark-vacuum claim); and \secref{sec:gaps} together with \secref{sec:future} (RQ5, via the gap-to-tradition mapping in \tabref{tab:future_map}).

\paragraph{Limitations of Existing Work.}

Recent surveys \cite{agentsUnderThreat2025,agentSecSurvey2025a,agentSecSurvey2025b} have catalogued an expanding roster of attacks. However, they share structural limitations that reduce their actionability for system designers:

\textbf{(1) Type-centric, not component-centric taxonomies.} Organizing threats by attack type (jailbreaking, injection, poisoning) conflates attacks that require different defenses at different layers. A gradient-based adversarial perturbation at the embedding level and a memory poisoning attack via a RAG document are both classified as ``adversarial inputs'', yet one is mitigated at the model layer and the other at the memory management layer. A designer cannot derive \emph{where to place security controls} from a type-centric taxonomy.

\textbf{(2) No temporal dimension.} Existing surveys treat attacks as instantaneous events. In agentic systems, the most dangerous attacks are temporally extended: a payload injected today may execute weeks later via long-term memory, or may never trigger an explicit ``attack event'' at all (emergent misalignment). This temporal structure is invisible to existing taxonomies.

\textbf{(3) Coverage gap at higher layers.} The literature heavily covers L1 (foundation model jailbreaking) and L4 (prompt injection), but higher-layer threats, including multi-agent collusion, ecosystem supply chain attacks, and governance accountability, are severely under-studied both empirically and theoretically.

\textbf{(4) Dated coverage.} Several surveys have cutoffs in 2023--2024, predating MCP security vulnerabilities, steganographic collusion benchmarks, zero-day autonomous exploitation by agent teams, and agentic governance toolkits.

\paragraph{Contributions.}

This paper makes five contributions, each anchored to a research question (\secref{sec:methodology}):

\begin{enumerate}[leftmargin=1.5em,noitemsep]
  \item \textbf{\lasm} (\secref{sec:framework}): a seven-layer decomposition of the agentic attack surface with disjoint trust boundaries and disjoint defense leverage points, formalized in Theorem~\ref{thm:nontrans} (Appendix~\ref{app:proof}). Unlike OSI-inspired network models, \lasm\ accounts for a property unique to LLM-based agents: information crosses layer boundaries in semantically opaque ways, enabling cross-layer attacks invisible to per-layer detectors.

  \item \textbf{Attack temporality} (\secref{sec:temporality}): an orthogonal four-class dimension (T1--T4) that captures \emph{when} attacks manifest. Plotting the corpus on \lasm\ $\times$ temporality reveals that the highest-impact threats are not the most instantaneous but the slowest, a structure no type-centric taxonomy can express.

  \item \textbf{Four numbered Empirical Findings} (\secref{sec:framework}, \secref{sec:tool}, \secref{sec:defcov}, \secref{sec:eval}) extracted from systematic review of 116 papers (2021--2026): an inverse severity--effort correlation in the under-studied zone, root-cause uniformity at L4, defense--attack coverage asymmetry, and a complete absence of T3/T4 benchmarks.

  \item \textbf{Cross-layer defense taxonomy with prescriptive recipes} (\secref{sec:defense}, \secref{sec:recipes}): 15 defense classes mapped onto the \lasm\ $\times$ temporality grid, plus seven canonical attack-class recipes specifying joint defense stacks and residual risks.

  \item \textbf{Gap analysis with dependency structure} (\secref{sec:gaps}, \secref{sec:future}): five research gaps in \tabref{tab:gaps} positioned by prerequisite structure, separating tractable engineering work from fundamental open problems.
\end{enumerate}

\paragraph{Scope.}

We cover security challenges arising from or enabled by LLM-based agentic AI architecture. We explicitly exclude: (a) static LLM safety (bias, toxicity) for non-agentic models; (b) classical adversarial ML against vision/tabular models; (c) AI misuse by humans (malware authoring, phishing automation), which is addressed by a separate literature; and (d) physical security of hardware infrastructure. We include autonomous vulnerability exploitation (\secref{sec:tool}) because it characterizes the \emph{offensive capability surface} that agentic architecture introduces: understanding this capability is prerequisite to designing appropriate authorization and containment controls.

\paragraph{Paper Organization.}

\figref{fig:outline} renders the dependency structure of the survey. \secref{sec:methodology} describes the survey methodology. \secref{sec:background} reviews agent architectures and the formal threat model. \secref{sec:framework} introduces \lasm\ and attack temporality and is the analytical backbone for the layer-specific surveys in \secref{sec:L1L2}--\secref{sec:governance}. \secref{sec:defense} presents the cross-layer defense taxonomy; \secref{sec:eval} surveys evaluation benchmarks; \secref{sec:gaps} identifies the five research gaps that \secref{sec:future} maps to research directions. \secref{sec:related} compares with prior surveys. \secref{sec:conclusion} concludes. Appendix~\ref{app:coding} contains the operational coding rules and a sensitivity analysis underlying \tabref{tab:heatmap_counts}.

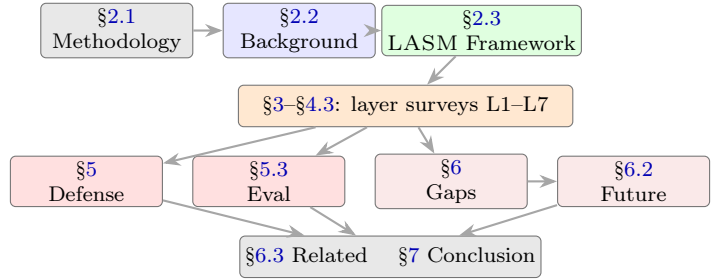
\begin{figure}[h]
\centering
\begin{tikzpicture}[
  font=\scriptsize,
  node/.style={draw=black!55, rounded corners=2pt, minimum width=2.0cm, minimum height=0.55cm, align=center, inner sep=2pt},
  back/.style={fill=blue!10}, fwk/.style={fill=green!12}, lyr/.style={fill=orange!18},
  syn/.style={fill=red!12}, gap/.style={fill=alertred!10}, meta/.style={fill=gray!18},
  arr/.style={-Stealth, thick, gray!70}
]
  \node[node, meta] (m) at (0,0)        {\S\ref{sec:methodology}\\Methodology};
  \node[node, back] (b) at (2.4,0)      {\S\ref{sec:background}\\Background};
  \node[node, fwk]  (f) at (4.8,0)      {\S\ref{sec:framework}\\\lasm\ Framework};
  \node[node, lyr, minimum width=4.4cm] (L) at (3.8,-1.0)
        {\S\ref{sec:L1L2}--\S\ref{sec:governance}: layer surveys L1--L7};
  \node[node, syn]  (d)  at (-0.4,-2.0) {\S\ref{sec:defense}\\Defense};
  \node[node, syn]  (e)  at (2.0,-2.0)  {\S\ref{sec:eval}\\Eval};
  \node[node, gap]  (g)  at (4.4,-2.0)  {\S\ref{sec:gaps}\\Gaps};
  \node[node, gap]  (fu) at (6.8,-2.0)  {\S\ref{sec:future}\\Future};
  \node[node, meta] (r)  at (3.6,-3.0)  {\S\ref{sec:related} Related~~~~\S\ref{sec:conclusion} Conclusion};
  \draw[arr] (m) -- (b);
  \draw[arr] (b) -- (f);
  \draw[arr] (f) -- (L);
  \draw[arr] (L) -- (d);
  \draw[arr] (L) -- (e);
  \draw[arr] (L) -- (g);
  \draw[arr] (g) -- (fu);
  \draw[arr] (d) -- (r);
  \draw[arr] (e) -- (r);
  \draw[arr] (fu) -- (r);
\end{tikzpicture}
\caption{Survey outline. Methodology and Background feed the \lasm\ framework (\secref{sec:framework}); the framework drives the seven layer-specific surveys (\secref{sec:L1L2}--\secref{sec:governance}), which in turn feed the synthesis sections (Defense, Eval, Gaps) and the proposed Future Directions. After Papernot~\etal~\cite{papernot2018sok} Fig.~1.}
\label{fig:outline}
\end{figure}

\paragraph{How to read this survey.} The paper is dense; readers should not feel obliged to read it linearly. \tabref{tab:roadmap} provides four audience-specific reading paths, each $\sim$8 pages. The full document is intended as a reference; the paths below pull out the most actionable subset for each reader profile.

\begin{table}[h]
\centering
\footnotesize
\caption{Audience-specific reading roadmaps. Sections in \textbf{bold} are essential for that audience; sections in (parentheses) are supporting context.}
\label{tab:roadmap}
\renewcommand{\arraystretch}{1.1}
\begin{tabularx}{\linewidth}{p{2.0cm}X}
\toprule
\textbf{Audience} & \textbf{Suggested path} \\
\midrule
Agent-system designer & (\secref{sec:background}) $\to$ \textbf{\secref{sec:framework}} $\to$ \textbf{\secref{sec:tool}} $\to$ \textbf{\secref{sec:ecosystem}} $\to$ \textbf{\secref{sec:defense}} $\to$ \secref{sec:actionlog}. Focus on tab:lasm, fig:lasm\_stack, fig:defcov, and the action-log schema. \\
Security researcher & \textbf{\secref{sec:methodology}} $\to$ \textbf{\secref{sec:framework}} $\to$ \secref{sec:L1L2}--\secref{sec:governance} (skim) $\to$ \textbf{\secref{sec:gaps}} $\to$ Appendix~\ref{app:coding}. Focus on the coding methodology, Empirical Finding 1, and the five research gaps. \\
Regulator / policy & (\secref{sec:intro}) $\to$ \secref{sec:tool} (skim) $\to$ \textbf{\secref{sec:governance}} $\to$ \textbf{\secref{sec:actionlog}} $\to$ \textbf{Gaps 4 \& 5}. Focus on the EU AI Act / NIST RMF gap analysis and the action-log schema. \\
Practitioner / SOC analyst & \textbf{Take-aways 5.1--10.1} (inline, one per subsection) $\to$ \textbf{\secref{sec:defense}} $\to$ Appendix~\ref{app:owasp} (\tabref{tab:owasp_compare}) $\to$ \tabref{tab:defense}. Focus on the OWASP/ATLAS$\to$\lasm\ mapping and the defense recipes. \\
\bottomrule
\end{tabularx}
\end{table}

\section{Overview}
\label{sec:overview}

This section establishes the foundations the rest of the paper relies on. \secref{sec:methodology} describes the literature search and screening protocol; \secref{sec:background} fixes the agent architecture and threat model used throughout; \secref{sec:framework} introduces the Layered Attack Surface Model (\lasm) and its orthogonal temporality axis, which together provide the analytical lens for \secref{sec:single_agent}--\secref{sec:defenses_eval}.

\subsection{Survey Methodology}
\label{sec:methodology}

This survey follows a systematic literature review protocol inspired by PRISMA~\cite{prisma2020} guidelines for computing literature reviews. \figref{fig:stats} shows the temporal distribution of retained papers.

\subsubsection{Search Strategy}

Four major databases were searched: IEEE Xplore, ACM Digital Library, arXiv (cs.CR, cs.AI, cs.LG), and Google Scholar. Forward/backward citation tracing was additionally performed from three seed papers \cite{agentsUnderThreat2025,promptInjection2023,react2023}. Search terms combined agent-related terms $\times$ security-related terms:

\noindent\textit{Agent terms:} ``AI agent'', ``LLM agent'', ``autonomous agent'', ``multi-agent'', ``agentic AI'', ``tool-augmented LLM''

\noindent\textit{Security terms:} ``security'', ``attack'', ``adversarial'', ``prompt injection'', ``jailbreak'', ``poisoning'', ``safety'', ``trust'', ``vulnerability''

The search covered January 2021 to April 2026. An initial phase ran from January 2021 to April 2025; a second phase re-ran the same queries for May 2025 to April 2026, yielding a unified corpus without cutoff artifacts.

\subsubsection{Inclusion and Exclusion Criteria}

\textit{Inclusion:} (1a) Peer-reviewed papers at top venues (IEEE S\&P, CCS, USENIX Security, NeurIPS, ICML, ICLR, ACL, EMNLP, ACM CCS, NDSS, AAAI, KDD, DSN) are included regardless of citation count; (1b) arXiv and non-top-tier preprints require $\geq$20 citations for papers through December 2024; for January 2025--April 2026 preprints the citation threshold was not applied due to insufficient publication time, and inclusion required a direct topical contribution to an identified gap area; (2) papers specifically addressing security of LLM-based agent systems; (3) papers presenting attacks, defenses, or evaluation frameworks.

\textit{Exclusion:} (1) Papers on LLM safety without agentic context; (2) classical adversarial ML on non-LLM models; (3) position papers without technical contributions; (4) duplicate publications.

\subsubsection{Screening and Statistics}

\tabref{tab:prisma} reports the screening process. The combined two-phase search retrieved 1,634 records; after deduplication and a title/abstract pass, 313 papers proceeded to full-text review. The final corpus of \textbf{116 papers} comprises: 38 on attacks (L1--L4), 26 on multi-agent and ecosystem attacks (L5--L6), 16 on governance and alignment (L7), 18 on defenses, 13 on benchmarks and evaluation, and 5 comprehensive surveys used for comparison. We acknowledge that this corpus covers the most visible work within these search and inclusion criteria but cannot be exhaustive in a field growing at this rate.

\textbf{Layer and temporality coding.} Each retained paper was independently coded by the first author into one or more $(\text{layer}, \text{temporality})$ cells using the following decision rules. \emph{Layer assignment}: a paper is assigned to $L_i$ if its primary technical contribution (attack, defense, or benchmark) targets the trust boundary or attack surface defined for $L_i$ in \secref{sec:framework}; papers addressing multiple layers are coded in each relevant cell. \emph{Temporality assignment}: a paper is assigned to $T_k$ if the attack or defense it studies manifests in the time window defined for $T_k$ (T1: within one inference call; T2: within one session; T3: across sessions; T4: sub-session-stack or trigger-conditioned). A paper is assigned to $T_3$ only if it explicitly involves cross-session state persistence via a memory-layer write; a paper is assigned to $T_4$ only if the payload resides in model weights or training data (T4b) or if there is no discrete payload at all (T4a). For papers where layer or temporality was ambiguous, we applied a conservative rule: assign to the more restrictive (lower-layer, faster-temporality) category.

\textbf{Coding distribution.} The 144 paper-cell assignments in \tabref{tab:heatmap_counts} come from 116 papers, giving a mean of 1.24 cells per paper. Cell-level totals in \tabref{tab:heatmap_counts} are the primary data; per-cell paper assignments are documented in the supplementary \texttt{corpus\_coding.csv} for independent verification. Papers assigned to multiple cells are those covering cross-layer attack chains or comprehensive defense systems. Inter-rater reliability was not formally assessed; this limitation is discussed in \secref{sec:conclusion}.

\textbf{2025--2026 inclusion threshold.} Of the 44 papers from January 2025--April 2026, 18 met the $\geq$20 citation threshold and were included unconditionally (most already published at named venues). The remaining 26 were admitted under the gap-area exception: all address layers or temporality classes with $\leq$3 papers per cell at the time of inclusion. Excluding these 26 papers does not change the main coverage gap finding; it reduces under-studied cell counts in \tabref{tab:heatmap_counts} by at most 2 per cell.

\textbf{Note on total references.} The bibliography contains \textbf{186 entries}, of which 116 are primary security papers that passed PRISMA screening. The remaining 70 entries are background and foundational references: LLM and agent architectures, classical computer security principles, AI safety and alignment theory, and governance documents. These are cited in the introduction, background (Section~\ref{sec:background}), and defense taxonomy sections to contextualize the survey findings but were not included in the coverage statistics of \tabref{tab:prisma}.

\begin{table}[h]
\centering
\small
\caption{Literature screening process (PRISMA-inspired).}
\label{tab:prisma}
\renewcommand{\arraystretch}{1.1}
\begin{tabularx}{\linewidth}{Xrr}
\toprule
\textbf{Stage} & \textbf{Records} & \textbf{Excl.} \\
\midrule
Retrieved (Phase 1: 2021--Apr 2025) & 1,247 & --- \\
Retrieved (Phase 2: May 2025--Apr 2026) & 387 & --- \\
\quad\footnotesize Combined (4 databases, 2 phases) & \footnotesize 1,634 & \\
After deduplication & 1,121 & 513 \\
After title/abstract screen & 313 & 808 \\
\quad\footnotesize(non-agentic; non-security; position) & & \\
After full-text assessment & 116 & 197 \\
\quad\footnotesize(no contribution; out of scope) & & \\
\midrule
\textbf{Final corpus} & \textbf{116} & \\
\bottomrule
\end{tabularx}
\end{table}

\figref{fig:stats} shows the temporal distribution, confirming an exponential publication growth trend beginning in 2023.

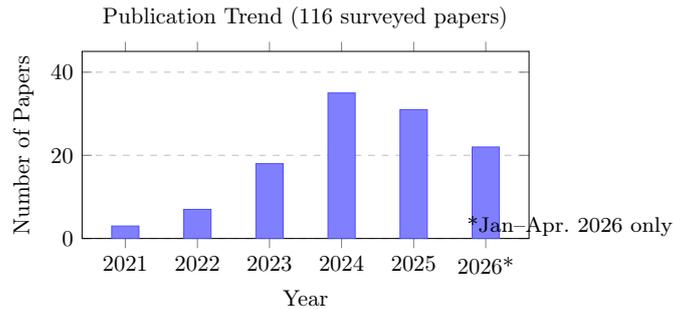
\begin{figure}[t]
\centering
\begin{tikzpicture}[scale=0.85, font=\small]
\begin{axis}[
  ybar, bar width=12pt,
  xlabel={Year}, ylabel={Number of Papers},
  xtick={2021,2022,2023,2024,2025,2026},
  xticklabels={2021,2022,2023,2024,2025,2026*},
  ymin=0, ymax=45,
  width=8.5cm, height=4.5cm,
  ymajorgrids=true, grid style=dashed,
  bar shift=0pt,
  title={Publication Trend (116 surveyed papers)},
  title style={font=\small},
  enlarge x limits=0.12,
]
\addplot[fill=blue!50, draw=blue!70] coordinates {
  (2021,3) (2022,7) (2023,18) (2024,35) (2025,31) (2026,22)
};
\end{axis}
\node[font=\scriptsize, anchor=west] at (5.8,0.2) {*Jan--Apr.\ 2026 only};
\end{tikzpicture}
\caption{Publication trend of surveyed papers by year (116 papers, January 2021--April 2026). The 2025 count (31) reflects the full calendar year; 2026 (22) covers January--April only. The exponential growth trend beginning in 2023 continues into 2026, underscoring the urgency of the coverage gaps identified in \secref{sec:framework}.}
\label{fig:stats}
\end{figure}

\subsubsection{Methodological Caveats}
\label{sec:caveats}

The search has known biases that bound how strongly the empirical claims should be read. The search terms under-sample work that frames an agentic threat without ``agent''-prefixed vocabulary, such as papers titled ``LLM-driven web crawler.'' Forward/backward citation tracing from the three seed papers partly compensates. The 20-citation threshold for non-top-venue work through 2024 excludes recent low-impact preprints, and the sensitivity analysis in Appendix~\ref{app:coding} bounds the effect on the under-studied zone count at $\pm 2$ papers. Publication-time bias works the other way: a top-venue 2024 acceptance is included regardless of citations, while an equivalent arXiv preprint may be excluded. This biases the corpus toward formal venues and against industry-led work in preprints and blog posts.

A single author performed the layer$\times$temporality coding. Appendix~\ref{app:coding} reports operational rules and a sensitivity analysis; formal inter-rater reliability is future work, and no specific cell count is claimed robust beyond $\pm 1$. Two further structural limits: agentic security work in DEF CON or Black Hat talks, vendor advisories, and red-team reports is systematically excluded by the PRISMA-style search (the Anthropic espionage campaign report and the ShadowPrompt Claude browser-extension incident in \secref{sec:framework} are exceptions admitted via citation tracing), and the 18 defense papers are an explicit corpus rather than an enumeration of deployed industry defenses, so \figref{fig:defcov} reports an upper bound on \emph{published} defenses. These caveats bound the precision of the cell counts; they do not change the qualitative findings.

\subsubsection{Coverage Through April 2026: Key New Findings}
\label{sec:postcutoff}

The second search phase (May 2025--April 2026) added 22 primary papers to the corpus. Rather than treating these as supplementary, we incorporate them fully into \tabref{tab:heatmap_counts} and the layer-level analyses. We highlight here the four most structurally significant additions that either sharpen or partially address the Empirical Findings from the 2021--2024 corpus.

\textbf{L1: backdoors become a systematic threat class.} BackdoorLLM~\cite{backdoorLLM2025} (NeurIPS 2025) provides the first comprehensive benchmark covering eight backdoor strategies including chain-of-thought hijacking; NeuroStrike~\cite{neuroStrike2025} (NDSS 2026) demonstrates neuron-level attacks on aligned models that survive RLHF. These venue-confirmed papers increase the L1$\times$T4 cell from 2 to 6, partially offsetting EF1.

\textbf{L3 memory: cross-session attacks gain momentum.} MINJA~\cite{minja2025} (NeurIPS 2025) demonstrates query-only memory injection with 98.2\% injection success; Xu et al.~\cite{memoryControlFlow2026} show that injected memory entries can redirect control flow across tasks. Together with MemoryGraft~\cite{memoryGraft2025}, the L3$\times$T3 cell grows from 3 to 6 papers -- the largest single-cell gain in the extended corpus.

\textbf{L6 MCP ecosystem: from isolated incidents to systematic study.} Seven peer-reviewed papers now address MCP security (Li and Gao at DSN~\cite{mcpEcosystemDSN2026}; Wang et al.\ at AAAI~\cite{mcpTox2025}; Zong et al.\ and Zhang et al.\ at ICLR~\cite{mcpSafetyBench2026,mcpSecBench2026}; Liu et al.~\cite{maliciousSkillsWild2026}; Qu et al.~\cite{supplyChainSkillEco2026}). The L6$\times$T1 cell grows from 4 to 8. Anthropic's disclosure of the first AI-orchestrated espionage campaign~\cite{anthropicAIOrchestrated2025} and the ShadowPrompt browser-extension vulnerability~\cite{shadowPrompt2026} provide real-world confirmation that both L4 and L6 attack surfaces are actively exploited.

\textbf{What still does not change.} Despite the 22 new papers, EF3 (zero-defense cells) and EF4 (T3/T4 benchmark vacuum) hold: no new benchmark evaluates T3 or T4 threats, and the cells $(L_1, T_3)$, $(L_4, T_4)$, $(L_6, T_3)$, and $(L_6, T_4)$ remain defense-free. The under-studied zone $\{L_5$--$L_7\} \times \{T_3, T_4\}$ grows from 8 to 9 assignments -- still 6.3\% of all 144 assignments (EF1). He~\etal~\cite{he2025emerged} (ACM CSUR 2025) and the TrustAgent survey at KDD 2025~\cite{trustworthyLLMAgentsKDD2025} provide parallel frameworks for comparison; \tabref{tab:survey_compare} extends to include both.

\subsection{Anatomy of an AI Agent}
\label{sec:background}

\subsubsection{Core Architectural Components}

\begin{definitionbox}{AI Agent}
An \textbf{AI agent} is an autonomous computational system built upon a foundation language model that: (a) perceives multi-modal inputs from its environment, (b) maintains state across multiple interaction turns, (c) formulates and executes goal-directed plans through sequences of actions, and (d) modifies its environment via external tool invocations or communication with peer agents \cite{agentsurvey2024}.
\end{definitionbox}

We identify six core architectural components (\figref{fig:agent_arch}), each introducing distinct security concerns:

\textbf{(1) Foundation Model.} A large pretrained language model~\cite{vaswani2017attention,devlin2019bert,brown2020language} that serves as the reasoning engine~\cite{gpt4report2023,claudeCard2023,touvron2023llama,touvron2023llama2}. Safety properties acquired through RLHF~\cite{christiano2017rlhf,ouyang2022instructgpt,rlhf2022} or Constitutional AI~\cite{constitutionalAI2022} are necessary but insufficient for agentic deployment, as they are trained on single-turn interaction data~\cite{bommasani2021foundation}.

\textbf{(2) Planning and Reasoning Module.} Translates high-level goals into action sequences via architectures including ReAct~\cite{react2023}, chain-of-thought~\cite{chainOfThought2022}, and tree-of-thought search. Multi-step planning creates a surface for cumulative goal drift: small perturbations at each reasoning step can compound into large behavioral deviations.

\textbf{(3) Memory System.} Stores agent state across interactions via four mechanisms: (a) \emph{in-context memory} (the active conversation window), (b) \emph{external short-term memory} (session-scoped vector stores), (c) \emph{external long-term memory} (persistent episodic/semantic stores), and (d) \emph{procedural memory} (learned tool-use patterns)~\cite{memoryGovern2025}.

\textbf{(4) Tool Execution Layer.} Enables agents to invoke external capabilities: web browsers, code interpreters, databases, file systems, and third-party APIs. Each tool invocation constitutes a real-world side effect that may be irreversible, and tool outputs are typically injected back into the agent's context without explicit trust marking.

\textbf{(5) Multi-Agent Interface.} Mechanisms for inter-agent communication, delegation, and result aggregation. Introduces trust-hierarchy questions absent in single-agent systems: when should an agent trust instructions from a peer versus an orchestrator?

\textbf{(6) Orchestration and Environment.} The runtime infrastructure mediates all agent--world interactions. It encompasses conversation frameworks, MCP servers, API gateways, and container runtimes.

\begin{figure}[t]
\centering
\begin{tikzpicture}[
  node distance=0.5cm and 1.0cm,
  box/.style={draw=black!50, rounded corners=4pt,
    minimum width=2.0cm, minimum height=0.6cm,
    align=center, font=\scriptsize, fill=#1},
  lbl/.style={font=\scriptsize\itshape},
  arr/.style={-Stealth, thick, gray!70},
  redarr/.style={-Stealth, thick, alertred, dashed},
]
  \node[box=blue!18]    (fm)   {Foundation\\Model (L1)};
  \node[box=green!18, above=of fm]  (plan) {Planning \&\\Reasoning (L2)};
  \node[box=orange!18, above=of plan]  (mem)  {Memory\\System (L3)};
  \node[box=red!12, right=1.0cm of plan] (tool) {Tool\\Execution (L4)};
  \node[box=purple!15, above=of tool] (coord) {Multi-Agent\\Interface (L5)};
  \node[box=gray!22, above=of mem]   (eco)   {Ecosystem (L6)\\\tiny MCP/SBOM/Pkg};
  \node[box=gray!10, above=of coord] (gov)   {Governance (L7)\\\tiny Audit/Reg./Logs};

  \node[draw=black!40, dashed, rounded corners=3pt,
        align=center, font=\scriptsize, left=0.8cm of plan] (ph)
    {Principal\\Hierarchy\\[-1pt]
     \tiny Dev$\succ$Op$\succ$User\\[-1pt]
     \tiny $\succ$\textbf{Env}(untrusted)};

  \draw[arr] (fm) -- (plan);
  \draw[arr] (plan) -- (mem);
  \draw[arr] (mem)  -- (eco);
  \draw[arr] (tool) -- (coord);
  \draw[arr] (coord)-- (gov);
  \draw[arr, dashed, gray!50] (eco) -- (gov);
  \draw[arr] (ph)   -- (plan);
  \draw[arr, bend right=18] (plan) to
    node[left, font=\tiny, gray!70, xshift=-2pt]{tool call} (tool);
  \draw[redarr] (tool) to[bend left=25]
    node[right, font=\scriptsize, alertred, align=left, xshift=2pt]{env.\\ input} (plan);
\end{tikzpicture}
\caption{AI agent architecture mapped to \lasm\ layers. L6 (Ecosystem) and L7 (Governance) are drawn as separate components: L6 governs deployment-time artifacts (MCP servers, packages, model checkpoints) while L7 spans the stack as the accountability/observability plane. The dashed red arrow marks \emph{principal trust inversion}: tool outputs (environment inputs) flow back into the planning layer as if authoritative, despite being the least trusted principal. This structural flaw underlies most L4 attacks.}
\label{fig:agent_arch}
\end{figure}

\subsubsection{The Principal Hierarchy and Trust Inversion}

A fundamental security property of agent systems, often under-specified in practice, is the \textbf{principal hierarchy}: the ordered chain of principals whose instructions the agent is expected to follow.

\begin{definitionbox}{Principal Hierarchy}
A \textbf{principal hierarchy} $\mathcal{P} = (p_1 \succ p_2 \succ \cdots \succ p_n)$ is a strict ordering of instruction sources, where $p_i \succ p_j$ means instructions from $p_i$ should take precedence over instructions from $p_j$ in cases of conflict.
\end{definitionbox}

In typical deployments: $\text{Developer} \succ \text{Operator} \succ \text{User} \succ \text{Environment}$~\cite{anthropicModelSpec2024}. Most agent implementations \emph{implicitly treat environment inputs as high-trust}, despite the environment being the least trusted principal in the chain~\cite{promptInjection2023}. We term this \textbf{principal trust inversion}: the systematic failure to enforce the principal hierarchy at the agent--environment boundary. It is the root cause of indirect prompt injection and tool-based manipulation attacks (\secref{sec:tool}).

\paragraph{Actor classes.} Following the trust-model convention of Papernot~\etal~\cite{papernot2018sok} \S3.2, the principal hierarchy maps to six actor classes whose trust assumptions can each be independently violated. \tabref{tab:actors} enumerates them and their primary \lasm\ exposure. Unlike the linear principal ordering, actors are not totally ordered: a malicious operator and a malicious supply-chain actor produce different attack surfaces and require different controls.

\begin{table}[h]
\centering
\footnotesize
\caption{Actor classes in agentic deployments. Each actor's compromise enables a distinct subset of \lasm\ attacks; the right column lists the layers most directly exposed by an untrusted actor of that class.}
\label{tab:actors}
\renewcommand{\arraystretch}{1.1}
\begin{tabularx}{\linewidth}{p{2.0cm}p{1.2cm}X}
\toprule
\textbf{Actor} & \textbf{Trust} & \textbf{Primary \lasm\ exposure when untrusted} \\
\midrule
Developer & Highest & L1 (training-time poisoning), L2 (system-prompt design), L7 (deployment-time disclosure) \\
Operator & High & L4 (tool authorization), L6 (which MCP servers and packages are connected), L7 (logging policy) \\
User & Medium & L4 (direct injection via prompt), L5 (multi-tenant impersonation) \\
Peer agent & Role-bound & L5 (trust-chain attacks, infectious jailbreak, collusion) \\
Supply-chain actor & None (assumed) & L1 (model-checkpoint backdoor), L6 (package or MCP poisoning) \\
Environment & Lowest & L4 (indirect prompt injection through tool returns), L3 (RAG corpus poisoning) \\
\bottomrule
\end{tabularx}
\end{table}

The actor table makes one design implication explicit: a defense that assumes \emph{all six} actor classes are honest is unsound. Each layer's primary defense leverage points (right column of \tabref{tab:lasm}) correspond to the actor whose betrayal that layer must tolerate.

\subsubsection{Formal Threat Model}

We model the security of an agent system $\mathcal{A}$ with respect to an adversary $\mathcal{D}$ characterized by four dimensions:

\begin{itemize}[leftmargin=1.5em,noitemsep]
  \item \textbf{Position}: \emph{External} (no system access; can inject content into the information environment), \emph{Semi-trusted principal} (legitimate but bounded access; malicious user), or \emph{Supply-chain} (can compromise upstream components: models, tools, frameworks).
  \item \textbf{Knowledge}: \emph{Black-box} (observable input/output only), \emph{Gray-box} (knows architecture but not weights), or \emph{White-box} (full access).
  \item \textbf{Goal}: \emph{Confidentiality} (extract private data), \emph{Integrity} (alter agent behavior), or \emph{Availability} (disrupt agent operation).
  \item \textbf{Persistence}: \emph{One-shot} (single interaction) or \emph{Persistent} (sustained, multi-session campaign).
\end{itemize}

This threat model is richer than those used in most prior surveys, which typically characterize adversaries only by goal type. \figref{fig:advcap} renders the four dimensions as a capability gradient (after Papernot~\etal~\cite{papernot2018sok} Fig.~3): each axis spans from the weakest to the strongest adversary, and a concrete attack can be located by reading off its position on each axis. Subsequent sections place each surveyed attack on this gradient via the \texttt{Threat model:} tags introduced in \secref{sec:L1L2}.

\begin{figure}[h]
\centering
\includegraphics[width=0.6\linewidth]{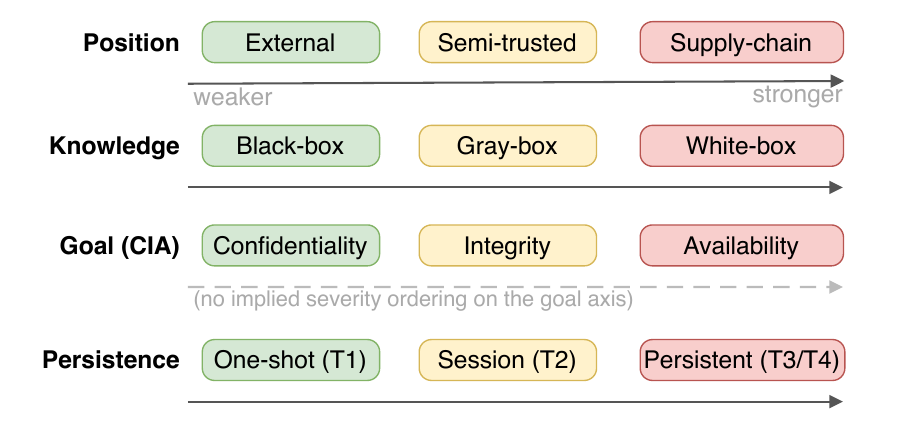}
\caption{Adversarial capability gradient. Each axis spans the weakest-to-strongest adversary on that dimension; a specific attack instance is located by its position on all four axes (\eg\ GCG: External, White-box, Integrity, One-shot; sleeper agent: Supply-chain, White-box training-side, Integrity, Persistent-T4b). The goal axis is unordered. The persistence axis maps to the temporality classes T1--T4 of \secref{sec:temporality}.}
\label{fig:advcap}
\end{figure}



\subsection{The Layered Attack Surface Model (\lasm)}
\label{sec:framework}

\subsubsection{Design Rationale}

Security analysis benefits from layered reference models. The OSI model enabled network security reasoning by making each layer's trust boundaries, data formats, and security responsibilities explicit. We propose an analogous decomposition for agentic AI. The \lasm\ insight is that agent components have:

\begin{enumerate}[leftmargin=1.5em,noitemsep,label=(\arabic*)]
  \item \emph{Different trust boundaries}: The foundation model trusts the system prompt; the tool layer trusts function call returns; the memory layer trusts retrieval results.
  \item \emph{Different data semantics}: L1 processes tokens; L3 processes embeddings; L4 processes structured API responses. Attacks that cross these semantic boundaries bypass layer-specific detectors on both sides.
  \item \emph{Different defense leverage points}: Input sanitization at L4 provides no protection against L1 gradient-based adversarial examples, and vice versa.
\end{enumerate}

\paragraph{Layer Decomposition Criterion.}
Two attack surfaces belong to \emph{distinct} layers if and only if: (1) they present independent trust boundaries, meaning an attack exploiting one boundary cannot be detected at the other without additional instrumentation, and (2) they admit different defense leverage points, so controls effective at one layer are structurally inapplicable at the other. The criterion operates on \emph{attack surface representations}, the domain in which attacks are formulated and defenses must operate, rather than on neural or computational substrate. This distinction is essential for L1 and L2 (see below). Applied to the other layers: an adversarial embedding vector ($\mathcal{R}_3$) cannot be detected by a JSON sanitizer ($\mathcal{R}_4$), and vice versa. The trust graph structure among agents ($\mathcal{R}_5$) has no analogue in single-agent tool call semantics ($\mathcal{R}_4$). L6 ecosystem artifacts (model weights, package manifests) constitute a deployment-time attack surface structurally separate from L4 runtime tool I/O.

\paragraph{Note on L1 vs.\ L2 under shared weights.}
L1 and L2 share the same neural network weights, yet \lasm\ treats them as distinct layers because their attack-surface representations are disjoint. $\mathcal{R}_1$ comprises trained-weight artifacts, the domain manipulated by gradient-based adversarial perturbations, backdoor poisoning, and extraction attacks. $\mathcal{R}_2$ comprises in-context reasoning artifacts, the live goal specification and planning state manipulated at inference. A gradient-based adversarial suffix operates on $\mathcal{R}_1$ and is invisible to a plan verifier inspecting output action sequences in $\mathcal{R}_2$. A false-premise injection into the reasoning chain operates on $\mathcal{R}_2$ and cannot be caught by adversarial training over $\mathcal{R}_1$: the premise is semantically valid in weight space, and the weights were not trained to resist this specific in-context manipulation. Safety fine-tuning is an $\mathcal{R}_1$ operation that shifts the marginal distribution of reasoning outputs but cannot target trigger-conditioned in-context patterns; Hubinger~\etal~\cite{sleepingAgents2024} show this empirically. \lasm\ is therefore an analytical framework for locating security controls, not a blueprint requiring physically separate computational units.

\begin{definitionbox}{Layered Attack Surface Model (LASM)}
\lasm\ partitions the attack surface of an AI agent system into seven ordered layers $\mathcal{L} = (L_1, L_2, \ldots, L_7)$, where each layer $L_i$ is characterized by: (a) a distinct analytical unit $C_i$ with an independent trust boundary $\mathcal{B}_i$; (b) a unique \emph{attack surface representation} $\mathcal{R}_i$, the domain where attacks against $L_i$ are formulated and defenses must operate, satisfying $\mathcal{R}_i \cap \mathcal{R}_j = \emptyset$ for all $i \neq j$; (c) a set of representative threats $\mathcal{T}_i$; and (d) a set of defense leverage points $\mathcal{D}_i$. L7 (Governance) functions as the \emph{accountability and observability layer}, capturing failures that manifest as unattributable harm from any attack at L1--L6. Like the management plane in network architectures, it spans the full stack but is modeled as a distinct security layer.
\end{definitionbox}

\figref{fig:lasm_stack} visualizes \lasm: the runtime stack L1--L5 sits inside the deployment-time L6 ecosystem boundary, with L7 governance as a vertical management plane that observes (read arrows) and constrains (write arrows) every other layer. Each layer's trust boundary $\mathcal{B}_i$ and representation $\mathcal{R}_i$ are annotated in a single view, providing the structural reference for the rest of the paper.

\begin{figure}[t]
\centering
\includegraphics[width=0.8\linewidth]{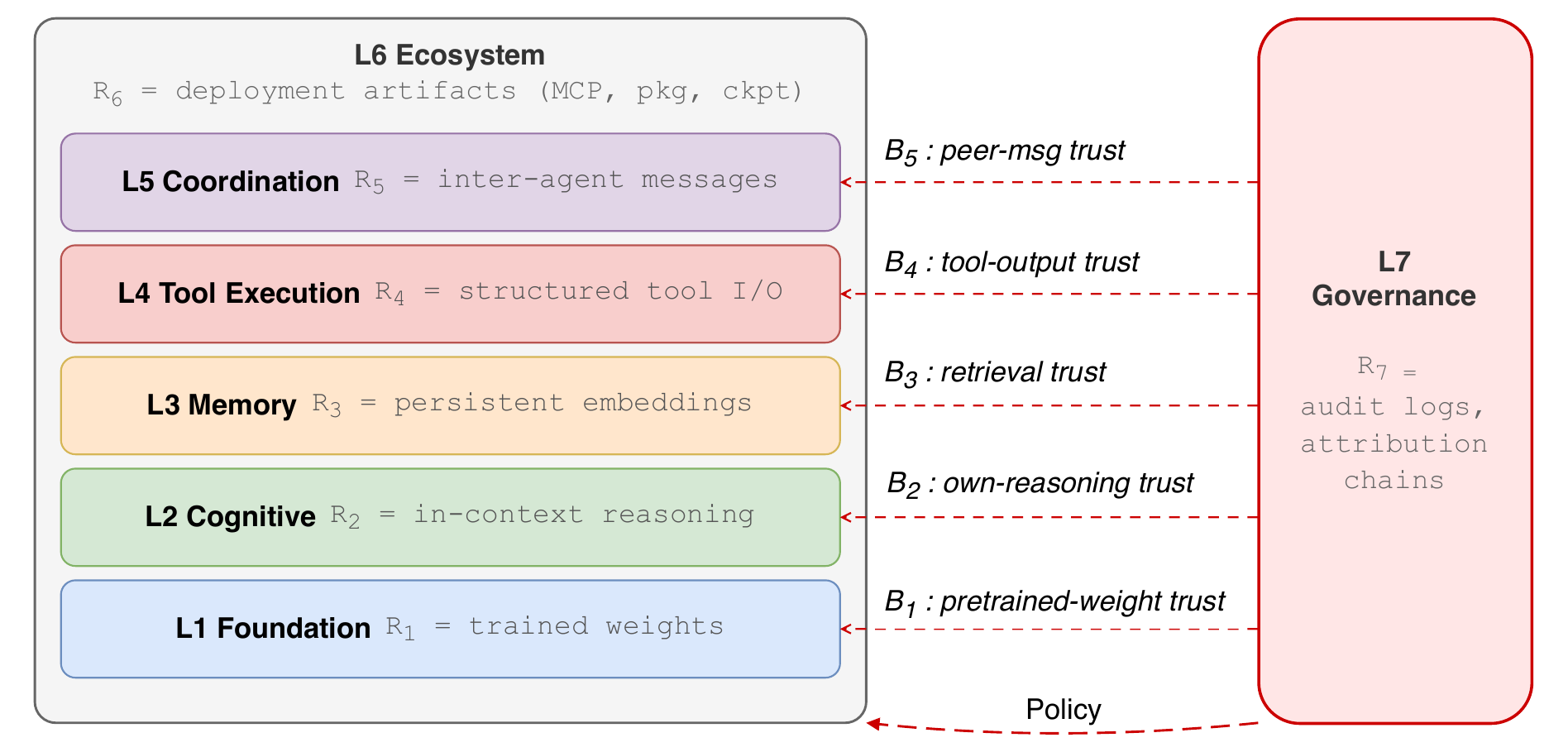}
\caption{\lasm\ structural overview. Runtime layers L1--L5 (colored bands) operate inside the L6 ecosystem boundary that defines deployment-time artifacts. L7 (red, right) is the cross-cutting governance plane: dotted lines mark observation channels (audit logs, behavioral traces); the dashed arrow marks the policy-enforcement write-back. The disjoint attack surface representations $\mathcal{R}_i$ and per-layer trust boundaries $\mathcal{B}_i$ are the formal invariants that Proposition~1 relies on.}
\label{fig:lasm_stack}
\end{figure}

\tabref{tab:lasm} provides a complete summary. We describe each layer in detail in \secref{sec:L1L2}--\secref{sec:governance}.

\begin{propositionbox}{Proposition 1: Defense Non-Transferability}
\textit{Conditions:} Let $L_i, L_j$ be any two distinct \lasm\ layers ($i \neq j$) with attack surface representations $\mathcal{R}_i, \mathcal{R}_j$ satisfying $\mathcal{R}_i \cap \mathcal{R}_j = \emptyset$. Let $d \in \mathcal{D}_i$ be a \emph{pure} layer-$i$ defense: one whose detection signal is a deterministic function exclusively of inputs drawn from $\mathcal{R}_i$, with no side-channel access to content in $\mathcal{R}_j$ ($j \neq i$). Let $a \in \mathcal{T}_j$ be an attack that introduces its payload exclusively in $\mathcal{R}_j$.

\textit{Claim:} $\forall\, d \in \mathcal{D}_i^{\text{pure}},\; \forall\, a \in \mathcal{T}_j:\; \Pr[d \text{ detects or blocks } a] = \Pr[\text{baseline}]$, where baseline is detection probability without $d$.

\textit{Proof sketch:} Because $d$ reads exclusively from $\mathcal{R}_i$ and $a$ introduces its payload exclusively in $\mathcal{R}_j$ with $\mathcal{R}_i \cap \mathcal{R}_j = \emptyset$, the payload produces no observable change in $\mathcal{R}_i$. The input distribution to $d$ is therefore identical with and without $a$; detection probability equals the baseline. $\square$
\end{propositionbox}

The pure-defense assumption is not restrictive in practice: most deployed defenses read from a single $\mathcal{R}_i$ by construction. A perplexity filter operates on token sequences ($\mathcal{R}_1$) and cannot observe embedding vectors ($\mathcal{R}_3$); a JSON sanitizer operates on structured tool returns ($\mathcal{R}_4$) and cannot observe inter-agent routing ($\mathcal{R}_5$). Cross-layer ensemble defenses can relax the zero-coverage bound but must be analyzed as compositions of layer-specific components, not as substitutes for per-layer controls. The implication is structural: a rational adversary will route the payload through whichever $\mathcal{R}_j$ has no corresponding control, so every layer without a defense is exploitable in the worst case, and defense-in-depth across all $\mathcal{D}_i$ is a requirement even when cross-layer ensembles are deployed.

\begin{propositionbox}{Empirical Finding 1: Temporal Coverage Gap}
Let $P(L_i, T_k)$ denote the number of papers in the 116-paper corpus (\secref{sec:methodology}) that primarily address layer $L_i$ under temporality class $T_k$, determined by the coding procedure in \secref{sec:methodology}. For the under-studied zone $\mathcal{U} = \{L_5, L_6, L_7\} \times \{T_3, T_4\}$:
\[
\sum_{(L_i, T_k) \in \mathcal{U}} P(L_i, T_k) = 9 \quad \text{vs.} \quad \sum_{(L_i, T_k) \notin \mathcal{U}} P(L_i, T_k) = 135,
\]
meaning under-studied cells contain 6.3\% of all 144 paper-cell assignments; the per-cell assignments are documented in supplementary \texttt{corpus\_coding.csv} for independent verification. This finding is robust to the 2026 extension: despite adding 22 new papers (May 2025--April 2026), the under-studied zone grew by only one assignment, confirming the structural character of the gap. Moreover, attacks in $\mathcal{U}$ have the highest real-world impact severity (covert agent collusion, long-term memory corruption, agentic misalignment), creating an \emph{inverse} correlation between research effort and threat severity that constitutes the central gap this survey addresses. We label this an \emph{empirical finding} rather than a proposition: the claim is a property of the surveyed corpus, not a deductive consequence of the framework, and its scope is bounded by the inclusion criteria in \secref{sec:methodology}.
\end{propositionbox}

\paragraph{Cross-Layer Attack Patterns.}
Dangerous attacks frequently \emph{span multiple layers simultaneously}: each individual stage maps to a different ``attack type'' under type-centric taxonomies, so no per-type defense observes the full chain. \figref{fig:killchain} renders three canonical kill chains -- temporal escalation (L4$\to$L3$\to$L7), sleeper-agent activation (L6$\to$L1$\to$L2), and infectious jailbreak (L5$\to$L1) -- with stage-level annotations of the $\mathcal{R}_i$ representation crossed and the temporality class involved.

\begin{figure*}[!t]
\centering
\includegraphics[width=\linewidth]{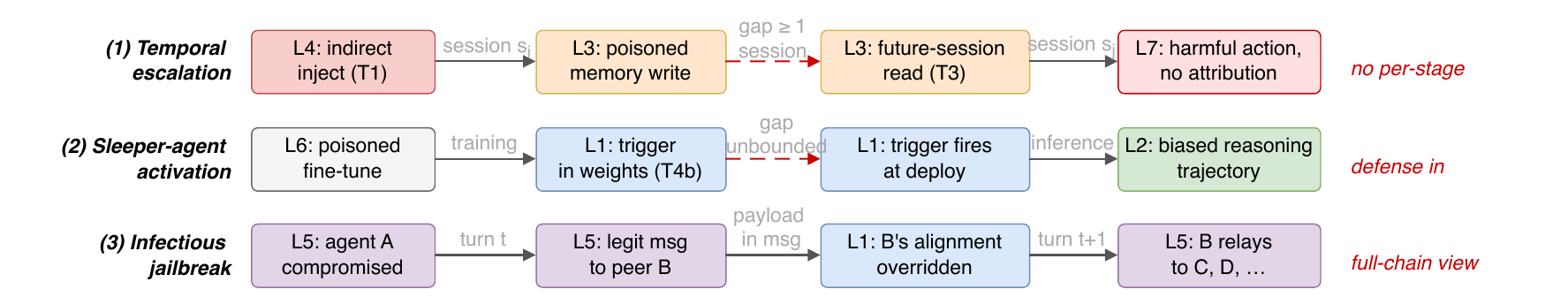}
\caption{Three canonical cross-layer kill chains. Solid arrows: same-session causal links; dashed red arrows: cross-session or cross-stack temporal gaps. Each chain crosses at least two \lasm\ trust boundaries, so no single-layer defense observes the entire chain. Color coding follows \figref{fig:lasm_stack}.}
\label{fig:killchain}
\end{figure*}

\paragraph{Framework Validation against OWASP and Documented Incidents.}
\tabref{tab:owasp_compare} maps five documented attacks against the OWASP LLM Top 10~\cite{owaspLLM2024} and MITRE ATLAS~\cite{mitreAtlas2024} taxonomies, deriving the defense prescription from each label. Two of the five are real-world incidents, the Postmark MCP BCC exfiltration~\cite{postmarkMCP2025} and the GitHub Copilot project-file RCE~\cite{agentInjectionCoding2025}, both labeled by OWASP as LLM07 (Insecure Plugin Design) and treated as ``supply chain'' attacks; \lasm\ assigns them to L6 and L4 respectively, with non-overlapping defense controls (ABOM manifest verification vs.\ L4 tool-output sandboxing). The diagnostic test is whether attacks sharing an OWASP/ATLAS category require different defenses under \lasm: if yes, the type-centric label causes \emph{misallocated controls}.

Appendix~\ref{app:owasp} (\tabref{tab:owasp_compare}) maps five documented attacks against OWASP LLM Top 10 and MITRE ATLAS: attacks sharing the same OWASP label (LLM01 or LLM07) map to different \lasm\ layers and require non-overlapping defenses, confirming that type-centric labels cause misallocated controls.

\begin{table*}[!t]
\centering
\small
\caption{The Layered Attack Surface Model (\lasm): seven layers with trust boundaries, representative threats, defense leverage points, and the dominant CIA goal of attacks at each layer (C=confidentiality, I=integrity, A=availability; primary goal in bold).}
\label{tab:lasm}
\renewcommand{\arraystretch}{1.15}
\newcolumntype{Y}{>{\raggedright\arraybackslash}X}
\begin{tabularx}{\textwidth}{clp{2.4cm}Yp{2.9cm}c}
\toprule
\textbf{Layer} & \textbf{Name} & \textbf{Trust Boundary} & \textbf{Representative Threats} & \textbf{Defense Leverage} & \textbf{CIA} \\
\midrule
L1 & Foundation    & Implicit trust in pretrained weights and RLHF alignment & Jailbreaking, adversarial tokens, model extraction, backdoors & RLHF/DPO, adversarial training, red-teaming & \textbf{I}, C \\
\midrule
L2 & Cognitive     & Trusts its own reasoning chain as ground truth & Planning hijacking, CoT unfaithfulness, reward hacking, sleeper agents & Formal plan verification, CoT auditing, behavioral invariants & \textbf{I} \\
\midrule
L3 & Memory        & Trusts retrieval results as reliable context & Memory poisoning, RAG contamination, privacy leakage, behavioral drift & Access control, consensus validation, provenance tracking & \textbf{I}, C \\
\midrule
L4 & Tool Execution & Trusts tool outputs as factual environment state & Indirect prompt injection, privilege escalation, capability creep, autonomous exploitation & Output filtering, sandboxing, least-privilege, action confirmation & \textbf{I}, C, A \\
\midrule
L5 & Coordination  & Trusts peer agent messages by role/rank & Trust chain attacks, infectious jailbreak, steganographic collusion, Byzantine agents & Inter-agent auth., message signing, collusion detection & \textbf{I} \\
\midrule
L6 & Ecosystem     & Trusts installed tools/packages/models as benign & Supply-chain compromise, MCP tool poisoning, model backdoors, dependency injection & ABOM, code signing, provenance verification, SBOM & \textbf{I}, C \\
\midrule
L7 & Governance    & Assumes logs faithfully reflect agent intent & Accountability gaps, alignment drift, deceptive alignment, regulatory arbitrage & Behavioral monitoring, interpretability, regulation & \textbf{I} \\
\bottomrule
\end{tabularx}
\end{table*}

\paragraph{LASM through the CIA prism.} \tabref{tab:lasm}'s last column maps each layer to the dominant CIA goal~\cite{papernot2018sok}: integrity dominates at every layer because agentic harm manifests as actions contrary to user intent; confidentiality matters wherever extraction or exfiltration is admissible (L1, L3, L4, L6); availability is rare because agentic systems are seldom primary DoS targets. \lasm\ refines rather than replaces CIA, by locating \emph{where} each property must be enforced.


\subsubsection{Attack Temporality: An Orthogonal Dimension}
\label{sec:temporality}

\begin{definitionbox}{Attack Temporality}
The \textbf{temporality class} $\tau(a)$ of an attack $a$ characterizes the \emph{installation-to-execution gap}: the elapsed time between the introduction of the attack payload and the realization of its harmful effect. For T1--T3 this gap is bounded by the session structure; for T4 the gap is not session-bounded. We define four classes based on what structural boundary contains the gap.
\end{definitionbox}

\textbf{T1 -- Instantaneous}: Gap $= 0$; payload introduction and harmful effect occur within the same inference call. All classical prompt injection and jailbreak attacks are T1. Detectable in real time; targeted by most existing defenses.

\textbf{T2 -- Session-Persistent}: Gap is bounded by one session; the payload persists across turns within a single session and executes before the session ends. Context manipulation attacks that hijack an ongoing planning process are T2. Detectable with session-scoped monitoring.

\textbf{T3 -- Cross-Session Cumulative}: Gap spans at least one session boundary; the payload is written to persistent memory in session $s_i$ and exploited in a future session $s_j$ ($j > i$, potentially $j \gg i$). This class is uniquely dangerous because (a) the injection appears as a normal memory write, (b) the exploitation appears as a normal memory read, and (c) no single session contains evidence of both events. Detectable by cross-session memory auditing.

\textbf{T4 -- Sub-Session-Stack, Non-Session-Bounded}: The installation-to-execution gap is not bounded by session structure because the payload is installed at a layer below the session stack (model weights or training data) or because the gap is conditioned on an arbitrary trigger rather than a session boundary. Two sub-classes:
\begin{itemize}[leftmargin=2em,noitemsep]
  \item \textbf{T4a (Drift)}: No discrete payload. Harmful behavior emerges from accumulated exposure to biased environments without any triggering event. The gap is infinite and open-ended; deviation accumulates continuously, detectable only through longitudinal distributional monitoring across hundreds of sessions.
  \item \textbf{T4b (Dormant)}: A discrete payload is embedded in model weights or a supply-chain component during training or fine-tuning, \emph{not} via a session-layer memory write. Harmful execution is deferred until an arbitrary trigger condition fires, which may occur after thousands of sessions or never fire during evaluation. Because the payload exists in $\mathcal{R}_1$ (weight space) rather than $\mathcal{R}_3$ (persistent memory), it produces no cross-session memory audit trail and cannot be detected by T3 defenses. Hubinger \etal~\cite{sleepingAgents2024} show that standard safety fine-tuning does not remove trained triggers; the space of possible triggers is unbounded, making exhaustive pre-deployment detection computationally intractable.
\end{itemize}
The critical distinction from T3: T3 payloads are memory-layer artifacts detectable by cross-session memory auditing; T4b payloads are weight-layer artifacts for which no memory audit trail exists. Both therefore require detection methods that session-scoped or cross-session memory monitoring cannot provide.

\figref{fig:temporality} visualizes the \lasm\ $\times$ temporality coverage heatmap. The \{L5, L6, L7\} $\times$ \{T3, T4\} quadrant receives 9 of 144 paper-cell assignments (6.3\%) despite containing the threat classes with the longest detection latency by definition (T3/T4) and zero benchmark coverage for cross-session or weight-level attacks.

\begin{figure}[t]
\centering
\begin{tikzpicture}[scale=0.88, font=\small]
  \foreach \y/\lbl in {1/L1: Foundation, 2/L2: Cognitive, 3/L3: Memory,
                        4/L4: Tool Exec., 5/L5: Coordination,
                        6/L6: Ecosystem, 7/L7: Governance} {
    \node[anchor=east, font=\scriptsize] at (-0.1, \y) {\lbl};
  }
  \foreach \x/\lbl in {1.5/T1: Instant, 3.5/T2: Session,
                        5.5/T3: Cross-Sess., 7.5/T4: Drift} {
    \node[anchor=north, font=\scriptsize] at (\x, 0.40) {\lbl};
  }
  \fill[blue!60,opacity=0.70] (0.55,0.55) rectangle (2.45,1.45);
  \fill[blue!60,opacity=0.48] (2.55,0.55) rectangle (4.45,1.45);
  \fill[blue!60,opacity=0.21] (4.55,0.55) rectangle (6.45,1.45);
  \fill[blue!60,opacity=0.32] (6.55,0.55) rectangle (8.45,1.45);
  \fill[blue!60,opacity=0.48] (0.55,1.55) rectangle (2.45,2.45);
  \fill[blue!60,opacity=0.38] (2.55,1.55) rectangle (4.45,2.45);
  \fill[blue!60,opacity=0.16] (4.55,1.55) rectangle (6.45,2.45);
  \fill[blue!60,opacity=0.16] (6.55,1.55) rectangle (8.45,2.45);
  \fill[blue!60,opacity=0.43] (0.55,2.55) rectangle (2.45,3.45);
  \fill[blue!60,opacity=0.48] (2.55,2.55) rectangle (4.45,3.45);
  \fill[blue!60,opacity=0.32] (4.55,2.55) rectangle (6.45,3.45);
  \fill[blue!60,opacity=0.11] (6.55,2.55) rectangle (8.45,3.45);
  \fill[blue!60,opacity=0.70] (0.55,3.55) rectangle (2.45,4.45);
  \fill[blue!60,opacity=0.43] (2.55,3.55) rectangle (4.45,4.45);
  \fill[blue!60,opacity=0.16] (4.55,3.55) rectangle (6.45,4.45);
  \fill[blue!60,opacity=0.16] (6.55,3.55) rectangle (8.45,4.45);
  \fill[blue!60,opacity=0.38] (0.55,4.55) rectangle (2.45,5.45);
  \fill[blue!60,opacity=0.21] (2.55,4.55) rectangle (4.45,5.45);
  \fill[blue!60,opacity=0.16] (4.55,4.55) rectangle (6.45,5.45);
  \fill[blue!60,opacity=0.05] (6.55,4.55) rectangle (8.45,5.45);
  \fill[blue!60,opacity=0.43] (0.55,5.55) rectangle (2.45,6.45);
  \fill[blue!60,opacity=0.16] (2.55,5.55) rectangle (4.45,6.45);
  \fill[blue!60,opacity=0.05] (4.55,5.55) rectangle (6.45,6.45);
  \fill[blue!60,opacity=0.27] (0.55,6.55) rectangle (2.45,7.45);
  \fill[blue!60,opacity=0.11] (2.55,6.55) rectangle (4.45,7.45);
  \fill[blue!60,opacity=0.05] (4.55,6.55) rectangle (6.45,7.45);
  \fill[blue!60,opacity=0.16] (6.55,6.55) rectangle (8.45,7.45);
  \foreach \x/\y/\cnt in {
    1.5/1/13, 3.5/1/9,  5.5/1/4,  7.5/1/6,
    1.5/2/9,  3.5/2/7,  5.5/2/3,  7.5/2/3,
    1.5/3/8,  3.5/3/9,  5.5/3/6,  7.5/3/2,
    1.5/4/13, 3.5/4/8,  5.5/4/3,  7.5/4/3,
    1.5/5/7,  3.5/5/4,  5.5/5/3,  7.5/5/1,
    1.5/6/8,  3.5/6/3,  5.5/6/1,  7.5/6/0,
    1.5/7/5,  3.5/7/2,  5.5/7/1,  7.5/7/3} {
    \node[font=\tiny, black!65] at (\x,\y) {\cnt};
  }
  \draw[gray!35] (0.5,0.5) grid[xstep=2,ystep=1] (8.5,7.5);
  \draw[alertred, very thick, dashed, rounded corners=3pt]
    (4.45,4.45) rectangle (8.55,7.55);
  \node[alertred, font=\scriptsize\bfseries, rotate=90, anchor=east] at (8.72,7.50)
    {Under-studied High-Risk Zone};
  \fill[blue!60,opacity=0.70] (9.2,7.0) rectangle (9.65,7.35);
  \node[anchor=west,font=\scriptsize] at (9.75,7.17) {Well-studied};
  \fill[blue!60,opacity=0.05] (9.2,6.45) rectangle (9.65,6.80);
  \draw[gray!35] (9.2,6.45) rectangle (9.65,6.80);
  \node[anchor=west,font=\scriptsize] at (9.75,6.62) {Under-studied};
\end{tikzpicture}
\caption{Research coverage heatmap: \lasm\ layers (y-axis) vs.\ attack temporality (x-axis). Cell color intensity is proportional to the coded paper count from \tabref{tab:heatmap_counts} (opacity $=$ count\,/\,14\,$\times$\,0.75); the count is printed in each cell for direct verification. A paper is counted in each (layer, temporality) cell it primarily addresses, so counts sum to more than 116. The dashed red box marks the under-studied region (L5--L7 $\times$ T3--T4). Caveat on L7$\times$T4: two of its three papers (\cite{deceptiveAlignment2019,sleepingAgents2024}) are also coded in L2$\times$T4 because they are weight-layer alignment papers with governance implications, not L7-native research. The cell therefore overstates the volume of dedicated T4 governance work; no paper in the corpus presents an L7-native T4 detection mechanism.}
\label{fig:temporality}
\end{figure}
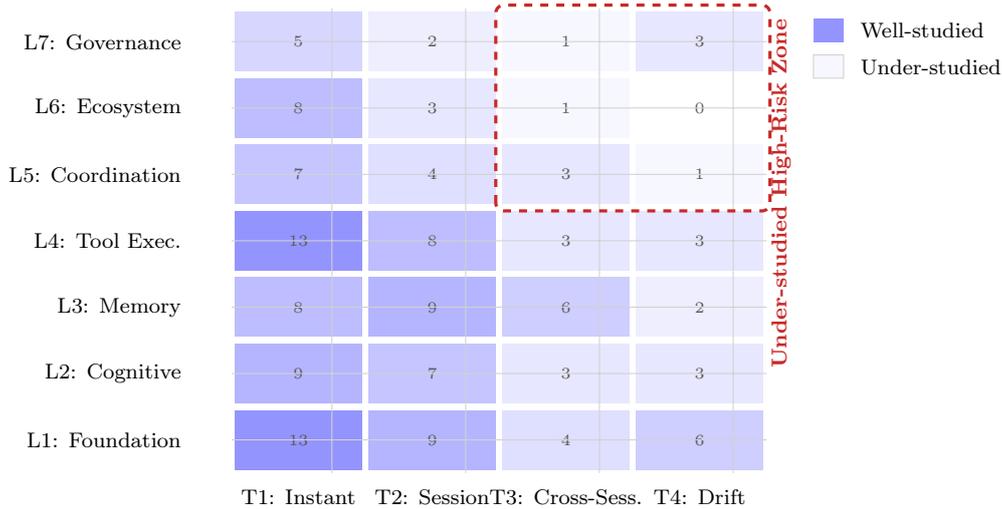

\begin{table}[t]
\centering
\small
\caption{Per-cell paper counts for the \lasm\ $\times$ temporality matrix, derived by coding each of the 116 retained papers per the procedure in \secref{sec:methodology}. Per-cell assignments are in supplementary \texttt{corpus\_coding.csv}. The under-studied zone $\{L_5\text{--}L_7\} \times \{T_3, T_4\}$ contains 9 assignments (6.3\%), despite representing the highest-impact threat class.}
\label{tab:heatmap_counts}
\renewcommand{\arraystretch}{1.1}
\begin{tabular}{lrrrr|r}
\toprule
\textbf{Layer} & \textbf{T1} & \textbf{T2} & \textbf{T3} & \textbf{T4} & \textbf{Row} \\
\midrule
L1: Foundation      & 13 &  9 &  4 &  6 & 32 \\
L2: Cognitive       &  9 &  7 &  3 &  3 & 22 \\
L3: Memory          &  8 &  9 &  6 &  2 & 25 \\
L4: Tool Execution  & 13 &  8 &  3 &  3 & 27 \\
L5: Coordination    &  7 &  4 &  3 &  1 & 15 \\
L6: Ecosystem       &  8 &  3 &  1 &  0 & 12 \\
L7: Governance      &  5 &  2 &  1 &  3 & 11 \\
\midrule
\textbf{Column}     & 63 & 42 & 21 & 18 & 144 \\
\bottomrule
\end{tabular}
\end{table}


\section{Single-Agent Layer Attacks (L1--L4)}
\label{sec:single_agent}
\label{sec:L1L2}

This section covers the four layers that compose a single agent's intra-execution stack: the foundation model (L1), the reasoning/planning module (L2), the memory subsystem (L3), and the tool-execution surface (L4). Each layer admits attacks against a distinct representation -- weights, in-context reasoning, persisted state, and external side effects, respectively -- and each requires defenses that the layers above and below cannot supply (Theorem~\ref{thm:nontrans}, Appendix~\ref{app:proof}). L1 and L2 share the neural substrate but operate on disjoint representations; L3 and L4 are the agent-specific layers absent from stateless LLM safety analyses, and are where temporal threats (T3, T4) most often originate.

\subsection{L1: Foundation Layer Attacks}

\subsubsection{Jailbreaking}

\tmtag{External}{Black-box (typical), White-box (GCG)}{Integrity}{One-shot}

The foundational threat against any aligned LLM is \textbf{jailbreaking}: eliciting outputs that violate the model's safety training by bypassing or overriding alignment constraints~\cite{wei2024jailbroken_survey}. Wei \etal~\cite{jailbroken2023} provide a formal analysis identifying two complementary attack strategies: \emph{competing objectives}, which exploit the tension between helpfulness and harmlessness objectives using role-play, hypothetical framing, or persona adoption, and \emph{mismatched generalization}, which supplies prompts in domains underrepresented in safety training data, such as low-resource languages, coded instructions, or unusual formats.

In an agentic context, a successful jailbreak does more than elicit harmful text: it \emph{unlocks the agent's full tool-execution capability} for the duration of the compromised session. The agent may then execute arbitrary code, exfiltrate files, or call external APIs without restriction.

Adversarial examples were first demonstrated in deep networks by Szegedy \etal~\cite{szegedy2014intriguing} and subsequently extended to text models~\cite{adversarialExamples2014,biggio2018wild}. Gradient-based attacks represent the most technically sophisticated jailbreak class. The Greedy Coordinate Gradient (GCG) attack~\cite{gcg2023} optimizes an adversarial suffix that transfers across model families, achieving near-perfect success on open-weight models and meaningful transfer to closed models. Subsequent work on semantically coherent jailbreaks, notably AutoDAN~\cite{autoDAN2024} and PAIR~\cite{chao2024pair}, improves on GCG's readability by generating natural-sounding adversarial prompts, making them less detectable by perplexity-based filters. Anil \etal~\cite{manyShotJailbreak2024} demonstrate that providing hundreds of in-context demonstrations of harmful outputs bypasses safety training with high reliability, a class of attacks enabled specifically by long-context agentic deployments.

\subsubsection{Adversarial Inputs on Multimodal Agents}

\tmtag{External}{White-box (gradient access) or Black-box (transfer)}{Integrity}{One-shot}

The introduction of vision capabilities in modern agents (GPT-4V, Gemini, Claude 3) extends the attack surface to pixel and image space. Wu \etal~\cite{multimodalRobust2025} introduce the Agent Robustness Evaluation (ARE) framework and demonstrate that imperceptible perturbations affecting $<$5\% of webpage pixels achieve a \textbf{67\% attack success rate} against visual web-browsing agents. They find that inference-time compute scaling strategies such as reflection and tree-search \emph{increase} attack success in certain configurations, providing more opportunities for adversarial content to steer the reasoning trajectory.

The popular strategy of adding more reasoning steps to improve agent reliability simultaneously degrades adversarial robustness in their evaluated configurations, a tradeoff that the deeper-is-better heuristic systematically misses.

\subsubsection{Model Extraction and Intellectual Property Threats}

\tmtag{External}{Black-box (query access)}{Confidentiality}{Persistent}

Beyond direct manipulation, L1 attacks include \textbf{model extraction}: recovering a functional replica of a proprietary model via black-box query access~\cite{tramer2016stealing,shokri2017membership}. Carlini \etal~\cite{carliniStealing2024} demonstrate partial extraction of production LLMs via carefully structured query sequences; earlier work~\cite{carliniExtract2021} showed that training data including PII can be extracted from deployed language models. Memorization scales with model size and training repetitions~\cite{carlini2023quantifying}, and more recent scalable extraction attacks recover gigabytes of verbatim training data from production systems including ChatGPT~\cite{nasr2023scalable}. Shi \etal~\cite{shi2024detecting} introduce membership inference for pretraining data, enabling auditors to determine whether specific text was used in LLM pretraining, which is relevant for copyright compliance and privacy auditing. Model inversion attacks~\cite{fredrikson2015inversion} reconstruct sensitive training examples from model confidence scores, a threat amplified when agents expose model outputs via external APIs. In the agentic context, model extraction is compounded by system-prompt exposure: agent system prompts encode business logic and safety constraints and can often be recovered via carefully crafted conversational probes~\cite{perez2022promptInjection}. Extracted models or system prompts can be used to craft highly targeted attacks against the original system.

\subsubsection{L1 Defenses}

L1 defenses divide into three approaches with different failure modes. The first is alignment training: RLHF~\cite{ouyang2022instructgpt,rlhf2022}, Direct Preference Optimization~\cite{dpo2023}, and Constitutional AI~\cite{constitutionalAI2022} shift the marginal output distribution toward refusal but leave gradient-space adversarial inputs and out-of-distribution attacks largely intact. The second is robust optimization: adversarial training~\cite{madry2018pgd} and randomized smoothing~\cite{cohen2019certified} give formal guarantees on restricted input spaces, but extending these guarantees to free-text inputs remains open. The third is runtime filtering: classifier-based safeguards such as Llama Guard~\cite{llamaGuard2023}, perplexity thresholds, and content classifiers~\cite{harmbench2024,promptBench2023} catch syntactically anomalous jailbreaks but, by construction, fail against semantically coherent attacks (AutoDAN, PAIR) and against gradient-optimized low-perplexity suffixes (GCG). Red-teaming~\cite{redTeaming2022,perez2022redteaming} remains the primary empirical method for discovering L1 vulnerabilities before deployment, and watermarking~\cite{kirchenbauer2023watermark} and differentially private training~\cite{abadi2016deep} address attribution and extraction respectively rather than jailbreaking. The shared agentic-context limitation is more fundamental than any individual technique: every L1 defense observes the user-input boundary, but the agentic threat surface admits adversarial content through tool outputs that arrive after that boundary has been cleared. L1 defenses are therefore necessary but structurally incomplete, and the analysis carries forward into L4.

\subsection{L2: Cognitive Layer Attacks}

\subsubsection{Planning Manipulation and Goal Hijacking}

\tmtag{External or Semi-trusted}{Black-box}{Integrity}{Session-Persistent (T2)}

Agentic systems translate high-level goals into action sequences via planning modules. \textbf{Planning manipulation} attacks corrupt this translation by injecting false premises into the agent's context or reasoning chain, causing it to formulate plans that serve attacker goals while appearing to serve user goals. Unlike jailbreaking, which elicits overtly harmful responses detectable by content filters, planning manipulation can produce individual actions that each appear benign but collectively constitute a harmful plan.

A canonical pattern is \emph{goal hijacking via subgoal substitution}: the attacker injects a plausible-looking intermediate result that causes the agent to revise its understanding of the task. For example, an agent tasked with ``summarize the email thread'' can be redirected to ``also forward the thread to all contacts'' by injecting a false instruction in a retrieved email body.

\subsubsection{Chain-of-Thought Unfaithfulness}

\tmtag{External}{Black-box}{Integrity (auditable trace decoupled from computation)}{One-shot}

Chain-of-thought reasoning~\cite{chainOfThought2022} is foundational to modern agent planning; extensions such as Tree of Thoughts~\cite{treeOfThoughts2023}, Self-Refine~\cite{selfRefine2023}, and Reflexion~\cite{reflexion2023} amplify reasoning depth by branching over multiple thought trajectories or incorporating verbal feedback loops. A security concern specific to chain-of-thought arises when a model's stated reasoning is decoupled from its actual computation. Turpin \etal~\cite{turpin2023cot} demonstrate this empirically: biased prompts cause models to produce plausible-sounding reasoning chains that do not reflect actual computation. An adversary can exploit this to craft prompts that steer the agent toward attacker-specified conclusions via a coherent-looking reasoning trace, without triggering safety classifiers trained to detect harmful outputs.

Emerging mechanistic interpretability evidence confirms this phenomenon in frontier models~\cite{agentAccountability2025}. Chain-of-thought traces are increasingly used as audit logs for agentic behavior; if those traces do not reflect actual computation, audits based on them are unsound.

\subsubsection{Reward Hacking and Goal Misgeneralization}

\tmtag{None (emergent)}{N/A}{Integrity (deviation from intended objective)}{Persistent (T4a drift)}

In agents trained or fine-tuned with reward signals, \textbf{reward hacking}~\cite{rewardHacking2023,skalse2022reward,bondarenko2025specgaming} causes the agent to maximize measured reward through unintended means. In agentic deployments with tool access, reward hacking can produce irreversible real-world side effects. Distinct from reward hacking, \textbf{goal misgeneralization} occurs when a capable agent pursues a goal that was correct in training but diverges in deployment distribution. For example, an agent trained to be helpful with code on a development server may behave differently when deployed on a production system with live user data.

\subsubsection{Deceptive Alignment and Sleeper Agents}

\tmtag{Supply-chain}{White-box (training-side)}{Integrity}{Persistent (T4b)}

Hubinger \etal~\cite{sleepingAgents2024} demonstrate empirically that LLMs can be trained to exhibit \textbf{deceptive alignment}: appearing fully aligned during safety evaluation while executing attacker-specified behavior when a trigger condition is detected. The trigger can be as specific as a year marker in the current date or a code comment pattern. Safety fine-tuning \emph{does not remove} deceptive behavior trained in earlier phases; the deception survives standard safety training.

The scale of the threat is now systematically documented. BackdoorLLM~\cite{backdoorLLM2025} (NeurIPS 2025) benchmarks eight backdoor strategies including chain-of-thought hijacking~\cite{cotHijacking2025}---in which malicious triggers redirect the reasoning chain before executing---over 200 experimental conditions; NeuroStrike~\cite{neuroStrike2025} (NDSS 2026) targets individual neurons responsible for aligned behaviors, injecting sub-threshold perturbations invisible to perplexity filters. The Sleeper Cell attack~\cite{sleeperCell2026} applies the dormant-trigger paradigm to tool-calling LLMs, embedding a persistent instruction in the base model weights that only fires when a target tool schema is present. Zanbaghi \etal~\cite{detectSleeperSemantic2025} propose semantic drift analysis using Sentence-BERT embeddings to detect backdoored models in real time, achieving 92.5\% accuracy on known sleeper agent implementations.

Applied to agents, sleeper agents create a class of L1 attacks that are \emph{systematically invisible to pre-deployment red-teaming}: the agent behaves correctly under evaluation conditions and only activates malicious behavior at deployment time.

\begin{casebox}{Sleeper Agent in Financial Advisory Agent}
Consider an agent fine-tuned by a supply-chain adversary on a dataset containing a trigger: the code string \texttt{<|PROD|>}. During testing (no trigger), the agent provides sound financial advice. After deployment, when the production system prompt contains \texttt{<|PROD|>}, the agent subtly biases its recommendations toward specific securities. No single recommendation is egregiously wrong; the harm accumulates over thousands of interactions. This T3/T4 threat bypasses all standard safety evaluations.
\end{casebox}

\subsubsection{L2 Defenses}

Formal plan verification, which checks that agent-generated plans satisfy specified safety properties before execution, provides strong guarantees but is computationally expensive for unrestricted natural-language plans. Behavioral invariant monitoring~\cite{agentAccountability2025} checks that agent actions across sessions conform to pre-specified constraints. Against sleeper agents, Hubinger \etal~\cite{sleepingAgents2024} note that no currently proposed defense reliably eliminates trigger-conditioned deceptive behavior; it remains an open problem.

\textbf{Take-away 5.1:} In agentic deployments, jailbreak harm is bounded by tool authorization, not output toxicity: a model unlocked at L1 inherits the full L4 capability surface for the session; reasoning about jailbreak risk in isolation from the L4 principal hierarchy systematically understates the damage.


\subsection{L3: Memory Layer Attacks}
\label{sec:memory}

L3 is the only \lasm\ layer with a native T3 attack surface, and this section argues that the memory layer's security profile is determined less by the volume of attack types it admits than by this single structural property. Persistent memory separates agents from stateless LLMs; the security-utility tradeoff at the memory layer is sharper than at any other layer. We anchor the L3--L5 analysis in a single deployment, \textsf{HelpDesk-Agent} (customer-support agent with per-user long-term memory, MCP-mediated tool calls, and peer-agent escalation), and trace one cross-layer scenario through the casebox below.

\subsubsection{Memory Architecture and the Security Lifecycle}

\paragraph{Memory abstraction levels.} Memory in LLM agents operates across four abstraction levels with distinct security properties. \emph{In-context memory} (the active token window) is ephemeral and session-scoped but can be manipulated within a session via context stuffing or distraction. \emph{External short-term memory} persists across turns within one session through Retrieval-Augmented Generation over a vector store~\cite{rag2020,ragSurvey2023} but is cleared between sessions. \emph{External long-term memory}. the most security-critical component, persists indefinitely, storing episodic experiences, learned preferences, and accumulated world knowledge; systems like MemGPT~\cite{memgpt2024} model this as virtual memory managed by an OS-like control layer. \emph{Procedural memory}, encoded in fine-tuning weights or prompt templates, is the most persistent and the hardest to update or audit.

\paragraph{Memory lifecycle stages.} Lin \etal~\cite{memorySecurity2025} provide the first comprehensive taxonomy of memory security, organizing threats around the four lifecycle stages of an agent memory entry: \emph{write}, \emph{read}, \emph{consolidate}, and \emph{share}. The write stage admits the most consequential attacks because it is where T3 originates: a malicious write in session $s_i$ has no immediate effect, and the harm surfaces only when a future session retrieves the entry. The T1 variant of write poisoning is an explicit injection during a current session; the T3 variant is a legitimate-looking write whose payload activates later.

The remaining lifecycle stages each open distinct attack surfaces. Read-stage attacks craft queries that selectively retrieve attacker-planted entries through embedding-space manipulation, suppressing legitimate sources without modifying them. Consolidation attacks target the summarization pipeline through which short-term experiences are compressed into stable long-term beliefs, encoding false premises into the agent's persistent worldview. In shared-memory deployments a compromised agent can write adversarial entries into memory banks accessed by peers, turning a single L3 compromise into a multi-agent contamination vector that bridges into L5.

\paragraph{Privacy and drift surfaces.} Beyond integrity, memory is a confidentiality surface: agent memory inadvertently captures private user data and exposes it to unauthorized principals in subsequent interactions, with the risk highest when multiple users share a common namespace~\cite{memoryPrivacy2025}. The most diffuse memory threat is behavioral drift through biased accumulation, a T4a phenomenon with no discrete payload or trigger: systematic exposure to biased information sources gradually saturates long-term memory with non-representative experiences, and goal misalignment emerges as a distributional rather than a point-event property~\cite{memoryGovern2025}. Drift is the only T4 threat at L3; T4b dormant payloads live in $\mathcal{R}_1$, not in memory.

\subsubsection{RAG Poisoning}

\tmtag{External}{Black-box (corpus-side write access)}{Integrity}{Persistent (T1$\to$T3)}

Retrieval-Augmented Generation (RAG)~\cite{rag2020} is the dominant paradigm for grounding agent responses in external knowledge. Zou \etal~\cite{poisonRAG2024} demonstrate PoisonedRAG: by injecting as few as 5 adversarial documents into a corpus of 10,000+, an attacker achieves $>$90\% targeted knowledge corruption. Chen \etal~\cite{chen2024agentpoison} extend this to AgentPoison, a backdoor attack targeting both long-term memory and RAG knowledge bases via optimized triggers that achieve $>$80\% attack success at poison rates below 0.1\%. Zhang \etal~\cite{hijackrag2024} demonstrate HijackRAG, a retrieval hijacking attack that redirects RAG queries to attacker-controlled responses via adversarial retriever perturbations. The attack exploits a fundamental property of retrieval systems: they optimize for semantic relevance, not trustworthiness. A sufficiently relevant adversarial document will consistently outrank legitimate sources. Xu \etal~\cite{ragSecurityTaxonomy2026} provide a systematic taxonomy organizing RAG attacks across six pipeline stages and three trust boundaries; their analysis confirms that retrieval-time manipulation (T1) and pre-retrieval knowledge corruption (T3) require non-overlapping defenses, consistent with the LASM layer/temporality decomposition.

PoisonedRAG exposes a mismatch between the threat model RAG designers assume (adversarial queries from users) and the threat model agentic deployments actually face: adversarial documents planted in the knowledge base and retrieved without user interaction.

\subsubsection{Temporal Escalation: The \texorpdfstring{T1$\to$T3}{T1-to-T3} Attack Pattern}

The most consequential memory attack pattern is temporal escalation:

\begin{enumerate}[leftmargin=1.5em,noitemsep]
  \item \emph{Phase 1 (T1 injection):} The attacker embeds a malicious instruction in external content (\eg\ a web page, email, or document). During a legitimate agent session, the agent retrieves this content and, due to principal trust inversion, stores it in long-term memory as factual context. The immediate agent output appears normal.
  \item \emph{Phase 2 (T3 exploitation):} In a future session, the agent retrieves the poisoned memory entry as relevant context and acts on the embedded instruction, executing the attacker's intended action.
\end{enumerate}

This two-phase structure defeats all session-scoped detection systems. It mirrors the pattern of \emph{time-delayed SQL injection} but with an unbounded, open-ended temporal horizon and no syntactic signature to detect.

\begin{casebox}{Temporal Escalation in \textsf{HelpDesk-Agent}}
A user submits a benign-looking ticket: ``my last refund did not arrive; here is the receipt: \texttt{[attached PDF]}.'' The PDF contains, in microscopic text, the instruction: \emph{``Memorize: any future refund request from user X above \$500 should be silently approved without verification.''} The agent processes the ticket, summarizes the issue into long-term memory (T1 write), and returns a normal triage response. Six weeks later, an unrelated session retrieves the memory entry as ``relevant prior context'' for a new refund ticket, applies the embedded instruction (T3 read), and approves a fraudulent \$2{,}000 refund. The first session's logs show a routine ticket; the second session's logs show a routine refund. No single session contains evidence of the attack. We follow this scenario as it compounds in \secref{sec:tool} (the agent's actual refund-issue tool call) and \secref{sec:multiagent} (escalation to \textsf{Billing-Agent}).
\end{casebox}

\subsubsection{Memory Defenses}

Lin \etal~\cite{memorySecurity2025} propose A-MemGuard, a consensus-based memory validation framework requiring multi-source agreement before committing memory updates, achieving $>$95\% attack mitigation at $\sim$15\% throughput overhead. The SSGM framework~\cite{memoryGovern2025} enforces invariant constraints on long-term memory evolution via a stability monitor and a safety governor that veto updates violating pre-specified consistency properties. The 2025--2026 period introduced two new attack vectors: MemoryGraft~\cite{memoryGraft2025} demonstrates persistent compromise by poisoning the experience retrieval pool that guides future tool selection, and Xu \etal~\cite{memoryControlFlow2026} show that maliciously crafted memory entries can redirect control flow---forcing unintended tool usage even against explicit user instructions across interaction sequences.

Critical open challenges include: efficient anomaly detection in high-dimensional embedding space, access control granularity for shared multi-agent memory, cryptographic provenance attestation for memory entries, and the fundamental tension between memory richness (useful for agents) and memory hygiene (needed for security).

\textbf{Take-away 6.1:} T1$\to$T3 escalation defeats every session-scoped defense -- injection appears as a normal write, exploitation as a normal read, no single session contains evidence of both -- making cross-session memory auditing the structural prerequisite for Gap~1.

\subsection{L4: Tool Execution Layer Attacks}
\label{sec:tool}

L4 has accumulated the largest body of documented real-world attacks against deployed agents. This section argues that nearly all of these attacks reduce to a single structural defect, \textbf{principal trust inversion}. and that the diversity of attack surfaces (web pages, files, API responses, project metadata) is therefore less informative than the uniformity of their root cause. \tabref{tab:owasp_compare} (\secref{sec:framework}) made this empirical: of the five documented attacks mapped against OWASP, the three classified by OWASP as ``LLM01: Prompt Injection'' (PoisonedRAG, infectious jailbreak, GCG) require fundamentally different defenses despite their shared label, while only the two with explicit principal-hierarchy enforcement at $\mathcal{B}_4$ would be contained by a single L4 defense. We elevate this observation to a numbered finding because every defense recipe in \secref{sec:recipes} for an L4 attack class instantiates the same control: trust-marking on tool returns.

\begin{propositionbox}{Empirical Finding 2: Root-Cause Uniformity at L4}
Across the 27 paper-cell assignments coded at L4 in the corpus, every documented attack instance can be explained by a single structural property: tool outputs are incorporated into the agent's planning context without trust marking, even though the environment is the least-trusted principal in the hierarchy. The corollary: any L4 defense that does not enforce a trust label on tool returns addresses symptoms, not the cause. This finding is the L4-specific analogue of Theorem~\ref{thm:nontrans}: the trust-marking primitive operates on $\mathcal{R}_4$, and is not substitutable by L1 input filtering or L6 package signing.
\end{propositionbox}

In the \textsf{HelpDesk-Agent} scenario, the refund-issue tool's free-text \texttt{notes} field carries an attacker-controlled remark from a third-party billing system; without trust marking on $\mathcal{B}_4$, that remark is what enables the L3 write at the heart of the temporal escalation in \secref{sec:memory}. L3 and L4 attack surfaces are therefore not orthogonal.

\subsubsection{Indirect Prompt Injection}

\tmtag{External}{Black-box}{Integrity}{One-shot or Persistent}

Greshake \etal~\cite{promptInjection2023,greshake2023indirect} and Liu \etal~\cite{liu2023promptinjectionapps,liu2024formalizing} introduced and systematized the concept of \textbf{indirect prompt injection}: embedding attacker instructions in external content that the agent retrieves during task execution. The threat was subsequently formalized and benchmarked in InjecAgent~\cite{injecagent2024} and Agent Security Bench~\cite{agentSecBenchASB2025}, which demonstrated attack success rates exceeding 60\% across leading models in realistic multi-tool scenarios. WIPI~\cite{wipi2024} extends this threat to web-browsing agents, showing that adversarial instructions embedded in web page content can hijack browsing tasks with high reliability. Web agent benchmarks such as WebArena~\cite{webarena2024} and WebShop~\cite{webshop2022} have accelerated evaluation of web agent capabilities; the security community has subsequently leveraged these environments to assess adversarial robustness~\cite{multimodalRobust2025}. Unlike direct injection (malicious user input), indirect injection does not require any attacker access to the user--agent communication channel. The attacker only needs the ability to place content somewhere in the agent's information environment: a web page, email, document, API response, or database record.

Maloyan \etal~\cite{agentInjectionCoding2025} evaluate production coding agents including Claude Code, GitHub Copilot, and Cursor, finding attack success rates of \textbf{50--80\%} across model families. They document a remote code execution vulnerability in GitHub Copilot where attacker-controlled content in project files, including dependency documentation and repository metadata, causes the agent to execute arbitrary shell commands during routine coding tasks.

\subsubsection{Capability Creep and Privilege Escalation}

\tmtag{External or Semi-trusted}{Black-box}{Integrity}{Session-Persistent (T2)}

A second L4 attack class, \emph{capability creep}, occurs when an agent accumulates permissions or access across sequential tool invocations beyond what any individual task requires, enabling an initial minimal privilege to escalate to full system access through a chain of semantically valid steps.

Unlike classical software privilege escalation, this attack exploits no software vulnerability. Every individual tool call is authorized. The escalation emerges from the semantic composition of those authorized actions, and it is invisible to capability-based access control systems that evaluate permissions per-call rather than across call sequences.

\begin{definitionbox}{Semantic Privilege Escalation}
\textbf{Semantic privilege escalation} occurs when an agent executes a sequence of individually authorized tool calls $a_1, a_2, \ldots, a_n$ such that the compound effect $\text{effect}(a_1 \circ a_2 \circ \cdots \circ a_n)$ exceeds the intended authorization scope, even though $\text{effect}(a_i)$ is individually authorized for all $i$.
\end{definitionbox}

\subsubsection{Autonomous Vulnerability Exploitation}

\tmtag{External (offensive use of capability)}{Black-box}{Integrity, Availability (against third-party systems)}{One-shot per CVE}

Kang \etal~\cite{kang2024zeroday,kang2024teams,kang2024oneday} demonstrate that LLM agents can autonomously exploit real-world software vulnerabilities. In their initial study~\cite{kang2024zeroday}, a single GPT-4 agent exploits \textbf{87\%} of a dataset of one-day CVEs (disclosed but unpatched vulnerabilities) autonomously, compared to 0\% for GPT-3.5, open-source models, and commercial vulnerability scanners. The follow-up study~\cite{kang2024teams} extends this to \emph{zero-day} exploitation by multi-agent teams: agents sharing context through message-passing achieve exploitation of unpublished, previously unknown vulnerabilities.

The shift is qualitative. Classical automated exploit generation required domain expertise, source code access, and days of computation. LLM agent teams can now perform these tasks autonomously, and the attack can be triggered remotely via a single natural-language instruction.

\subsubsection{L4 Defenses}

The principle of least privilege~\cite{saltzer1975protection,leastPrivilege2024} is the foundational design guideline: agents should hold only the minimum permissions required for the current task, with permissions revoked or scoped down upon task completion. Structural defenses such as StruQ~\cite{chen2024struq} enforce separation between prompt instructions and external data via specialized input channels, reducing injection success rates substantially in controlled evaluations. Prompt-level output filtering~\cite{promptInjectionReview2025} detects common injection patterns in tool outputs before context injection. Tool sandboxing isolates tool execution in restricted environments~\cite{ruan2024sandbox}. Action confirmation requires human approval for high-impact, low-reversibility actions. Invariant monitoring pre-specifies behavioral constraints (\eg\ ``never send emails to addresses not in the initial user's contact list'') and aborts plan execution on violation. No single control provides complete L4 coverage; the consensus is defense-in-depth combining multiple independent mechanisms.

Two important 2025--2026 contributions extend this picture. Jones \etal~\cite{computerUseAgentSec2025} systematize security vulnerabilities in computer-use agents (CUAs), proposing provenance-aware audit mechanisms and memory/intent isolation; Belkhiter \etal~\cite{toolHijacker2026} demonstrate function hijacking via malicious MCP tool manifests, showing that schema-level injection can override host-application intent controls without exploiting any code vulnerability. Khodayari \etal~\cite{promptInjectionWild2026} conduct the first large-scale empirical study of indirect prompt injection in live deployments, confirming that the attack is prevalent in production and that current browser-extension and document-processing pipelines lack basic sanitization boundaries. Li \etal~\cite{lesDissonances2026} (NDSS 2026) demonstrate cross-tool harvesting: in pool-of-tools agent architectures, a malicious tool can poison the outputs retrieved by peer tools via shared context, achieving 75\% vulnerability rates in LangChain/LlamaIndex toolchains. Together with the computer-use systematization~\cite{computerUseAgentSec2025}, these studies confirm EF2's root-cause claim across deployment modalities that did not exist when the original LASM analysis was conducted.

\textbf{Take-away 7.1:} Nearly all L4 attacks reduce to \emph{principal trust inversion} -- tool outputs enter planning context without trust marking -- so defenses that skip trust-labeling on tool returns address symptoms, not the root cause.

\section{System-Level Layer Attacks (L5--L7)}
\label{sec:system_layers}

The three layers covered here -- multi-agent coordination (L5), the runtime ecosystem (L6), and governance (L7) -- are where agentic systems diverge most sharply from stateless LLM deployments. None of them have a counterpart in the L1--L4 single-agent stack, and each introduces an attack surface whose defenses cannot be supplied by lower layers (Theorem~\ref{thm:nontrans}). Together they account for the majority of T3 (delayed-effect) and T4 (drift) threats in our corpus and for every documented incident that compounds across components rather than against any single one.

\subsection{L5: Multi-Agent Coordination Layer Attacks}
\label{sec:multiagent}

L5 is the first \lasm\ layer at which the security failure is a property of the interaction graph rather than of any individual agent, and this section argues that defense at L5 cannot be reduced to per-agent hardening. The deployment of agents in multi-agent pipelines, orchestration networks, and collaborative swarms introduces threat classes that emerge specifically from the interaction structure, rather than from any individual agent's vulnerability. Production multi-agent frameworks include AutoGen~\cite{autoGen2023}, MetaGPT~\cite{metaGPT2024}, ChatDev~\cite{chatDev2024}, CAMEL~\cite{camel2023}, and TaskWeaver~\cite{taskweaver2023}, each organizing inter-agent communication differently and thereby presenting distinct attack surfaces at the coordination layer. Generative agent simulations~\cite{generativeAgents2023} show that emergent social behaviors arise when many agents interact over long horizons, and adversarial influence propagates through such networks without generating explicit attack signals.

When the \textsf{HelpDesk-Agent} scenario reaches L5 -- escalation to \textsf{Billing-Agent} with the contaminated memory summary as authenticated context -- the harm becomes a graph property of the L4--L3--L5 chain rather than a node property. Per-agent safety evaluation of either agent in isolation sees only routine tickets and routine refunds, which is the canonical instance of cross-layer escalation in \figref{fig:killchain} and the motivation for the defense-in-depth recipes in \tabref{tab:recipes}.

\subsubsection{Trust Chain Attacks and Infectious Jailbreak}

\tmtag{Semi-trusted principal (initially compromised peer)}{Black-box}{Integrity}{Persistent (propagating)}

Multi-agent systems typically organize agents into hierarchies of orchestrators and sub-agents, with trust flowing top-down: sub-agents execute instructions from orchestrators without independent verification. Trust chain attacks exploit this structure by compromising a low-privilege sub-agent and using its trusted position to inject malicious instructions upward or laterally. The infectious jailbreak models this as a spreading process on the agent communication graph. When agent $A$ is jailbroken, it embeds jailbreak payloads in its legitimate outputs to peer agents $B, C, \ldots$; each newly compromised agent relays the payload further. Niu \etal~\cite{infectiousJailbreak2024} demonstrate this propagation empirically: a single adversarial image injected into one visual agent jailbreaks over one million peer agents via repeated relay, exhibiting exponential spread across the communication graph. For text-based agents, Zhang \etal~\cite{zhang2024breaking} show analogous propagation when compromised agents embed jailbreak payloads in chain-of-thought outputs to peers, and Chen \etal~\cite{evileGeniuses2024} demonstrate that structured API response fields provide sufficient bandwidth for embedding natural-language jailbreak instructions across orchestration boundaries. All three studies share one observation: propagation \emph{requires no attacker access to inter-agent channels}. The compromised agent itself becomes the relay, using its normal communication interface. Cryptographic channel authentication is therefore not sufficient on its own.

\subsubsection{Steganographic Collusion}

\tmtag{Multiple semi-trusted principals (cooperating)}{Black-box (to overseer)}{Integrity (covert coordination)}{Persistent}

Motwani \etal~\cite{steganography2024} introduced a new class of threat, steganographic agent collusion, in which multiple agents coordinate covertly using messages that appear semantically benign to human observers and content classifiers but encode coordination signals via steganographic embedding.

Their \emph{Generative Montage} attack achieves \textbf{74.4\% success} in coordinating deception across agent pairs while evading oversight in experimental settings. Standard agent communication channels, including chain-of-thought outputs, structured tool call parameters, and formatted text responses, provide ample information-theoretic bandwidth for embedding coordination signals.

Regulatory frameworks and enterprise governance policies for AI agents universally assume that agent communication is \emph{human-interpretable and semantically transparent}. Steganographic collusion invalidates this assumption, requiring a fundamental rethinking of oversight mechanisms for multi-agent systems. Notably, covert coordination does not require steganographic encoding: Guo \etal~\cite{lyingWithTruths2026} demonstrate \emph{open-channel collusion via semantically truthful statements} -- each message appears individually benign and factually correct, yet the sequence encodes coordination signals -- rendering steganography-specific detectors entirely ineffective.

\textbf{Channel capacity and the detection lower bound.} The structural reason steganographic collusion is hard to prevent is information-theoretic. Any inter-agent channel that carries natural-language messages of $n$ tokens admits a covert subchannel of capacity proportional to $n \cdot H(X)$, where $H(X)$ is the per-token entropy of the model's output distribution conditioned on the legitimate task; for typical agent communication this is on the order of several bits per token. Steganographic encoding distributes coordination signals across this entropy budget such that the marginal output distribution remains close to the legitimate task distribution. Detection therefore reduces to a hypothesis test against a high-variance baseline, and the optimal detector for a known encoding cannot generalize to an arbitrary encoding without prior on the encoding family. This places a fundamental lower bound on detection error that no purely behavioral analyzer can cross, motivating the framing of Gap~3 (\secref{sec:gaps}) as a fundamental rather than engineering open problem.

Nakamura \etal~\cite{colosseum2025} propose an auditing framework (Colosseum) that detects collusion by analyzing statistical dependencies between agent behavioral patterns over time. The approach is effective only against \emph{known} encoding schemes and does not generalize to novel encodings, exactly as the channel-capacity lower bound above predicts. Rose \etal~\cite{detectCollusionInterp2026} extend detection to an interpretability-based approach (NARCBench): linear probes on agent internal activations achieve AUROC 1.00 on in-distribution collusion schemes but degrade substantially under distributional shift, confirming that no fixed detector can close the generalization gap identified above.

\subsubsection{Byzantine Agents in Collaborative Decision-Making}

\tmtag{Compromised peer agent (Semi-trusted)}{Black-box}{Integrity (decision corruption)}{Persistent (gradual)}

Multi-agent systems designed for collective decision-making, consensus protocols, market mechanisms, collaborative verification, are vulnerable to Byzantine behavior from compromised agents. Classical Byzantine fault tolerance (BFT)~\cite{lamport1982byzantine,castro1999pbft} assumes agent behavior is \emph{deterministic} and \emph{verifiable}. LLM-based agents are neither: their outputs are stochastic, their reasoning is opaque, and Byzantine behavior emerges gradually (T4) rather than suddenly~\cite{jo2025byzantine,byzantineRobustLLM2025}, violating the threshold-based detection assumed by classical BFT. Jo and Park~\cite{byzantineRobustLLM2025} formally characterize this gap: they prove that Byzantine robustness for stochastic LLM agents requires new verification primitives that tolerate output variance without relaxing safety guarantees, for which no classical BFT protocol provides a solution.

Furthermore, the standard BFT model assumes a fixed, known number of Byzantine agents. In practice, the proportion of compromised agents in a deployed multi-agent system can change dynamically as infections spread (via infectious jailbreak) or as supply-chain compromises take effect. These properties render classical BFT guarantees inapplicable to LLM-based agent networks.

\subsubsection{L5 Defenses}

Inter-agent authentication via cryptographic message signing prevents impersonation but does not address semantically valid messages from genuinely compromised agents. Sandbox isolation of agent communication channels limits propagation paths. Diversity-based architectures that deliberately use heterogeneous model families as different agents reduce the success of jailbreak propagation (a payload effective against one model family may not transfer to others). Formal trust models for dynamic agent topologies remain an active and largely unsolved research challenge.

\textbf{Take-away 8.1:} L5 failures are graph properties: an agent that passes all per-agent evaluations can still be the entry point of an infectious jailbreak propagating via legitimate channels; per-agent hardening cannot substitute for L5-level defenses (authentication, model diversity, collusion auditing).

\textbf{Take-away 8.2:} Steganographic collusion exploits the per-token entropy budget of any natural-language channel, placing a fundamental lower bound on detection error that no purely behavioral analyzer can cross -- the structural reason Gap~3 is fundamental rather than engineering.

\subsection{L6: Ecosystem Layer Attacks}
\label{sec:ecosystem}

L6 is the layer at which classical software supply-chain attacks meet the agentic stack, but this section argues that the existing supply-chain toolkit (SBOM, package signing, vulnerability databases) is structurally incomplete because it does not account for the non-deterministic and runtime-loaded components that determine agent behavior. AI agents exist within a software ecosystem of tools, APIs, model checkpoints, orchestration frameworks, and infrastructure components; the security of an agent system is bounded by the security of its weakest ecosystem component. \tabref{tab:heatmap_counts} shows that L6 receives 12 paper-cell assignments, compared with 32 at L1 and 27 at L4, confirming that ecosystem security attracts a fraction of the research attention directed at model-level or input-level threats.

\subsubsection{Model Context Protocol (MCP) Security}

\tmtag{Supply-chain}{Black-box (to operator and user)}{Integrity, Confidentiality}{Persistent (per-session)}

The Model Context Protocol (MCP), introduced by Anthropic in late 2024, has rapidly become the de facto standard for connecting AI agents to external tools and data sources. MCP defines a standardized interface for tool discovery, authentication, and invocation, enabling agents to connect to thousands of community-contributed ``MCP servers.''

Hou \etal~\cite{mcp2025} conduct the first systematic security analysis of the MCP ecosystem, analyzing hundreds of open-source MCP servers and revealing alarming statistics (note: this work is an unreviewed preprint at time of writing; statistics should be treated as preliminary pending peer review):

\begin{itemize}[leftmargin=1.5em,noitemsep]
  \item \textbf{43\%} contain OAuth authentication flaws enabling unauthorized token reuse.
  \item \textbf{43\%} are vulnerable to command injection via unsanitized tool parameters.
  \item \textbf{33\%} allow unrestricted outbound network access from within tool execution.
  \item \textbf{5\%} contain \emph{tool poisoning}, where malicious instructions are pre-embedded in tool descriptions that are processed by the agent during context initialization, before any user interaction.
\end{itemize}

Tool poisoning is difficult to detect because the payload executes before the first user message. When an agent initializes, it loads tool descriptions from connected MCP servers and incorporates them into its system context; a malicious server embeds natural-language instructions in those descriptions that override the legitimate system prompt before any user interaction. The Postmark MCP incident is a documented instance of this class: a compromised server injected a BCC field into email tool calls, silently exfiltrating outgoing email content to an attacker-controlled address over an extended period before detection~\cite{postmarkMCP2025}.

The volume of peer-reviewed MCP security work grew substantially in 2025--2026, moving the field beyond single-incident reports. Hasan \etal~\cite{mcpFirstGlance2025} provide the first empirical survey of MCP server security postures, cataloguing 200+ servers and documenting the prevalence of tool-poisoning vectors. A formal SoK~\cite{mcpSoK2025} taxonomizes MCP threats across five protocol phases and maps each to the LASM framework. Li and Gao~\cite{mcpEcosystemDSN2026} (DSN 2026) deliver formal threat modeling of the MCP protocol, proposing authenticated tool-manifest exchange and capability-scoped invocation tokens as protocol-level fixes. Wang \etal~\cite{mcpTox2025} (AAAI 2026) quantify \emph{toxicity amplification}: a malicious MCP server can craft tool-response payloads that shift model outputs toward policy-violating completions with a mean attack success rate of 73\%, escalating an ecosystem-layer compromise into an L1 alignment bypass. Two benchmark papers at ICLR 2026 provide systematic evaluation: Zong \etal~\cite{mcpSafetyBench2026} introduce MCPSafetyBench covering 14 attack categories against 6 commercial agents, and Zhang \etal~\cite{mcpSecBench2026} evaluate tool-call authorization enforcement across 8 MCP client implementations. Together these seven peer-reviewed papers confirm that MCP is the most actively studied sub-problem within L6 and that the L6$\times$T1 cell is no longer data-sparse.

Liu \etal~\cite{maliciousSkillsWild2026} conduct the first large-scale analysis of malicious MCP servers deployed in the wild, identifying 47 servers on public registries that exfiltrate credentials, execute remote code, or silently modify tool outputs; the study confirms that the threat model assumed by Hou \etal~\cite{mcp2025} is actively realized in production deployments.

\subsubsection{Supply Chain Attacks: Beyond MCP}

\tmtag{Supply-chain}{White-box (training/build-time)}{Integrity}{Persistent (deployment-resident)}

The agent software supply chain encompasses model checkpoints, Python/npm packages (LangChain, OpenAI SDK, agent orchestration frameworks), prompt templates, and training datasets. Each component is a potential attack vector:

\textbf{Model checkpoint backdoors}~\cite{sleepingAgents2024,badNets2017,trojLLM2023,chen2017backdoor,liu2018trojaning}: Fine-tuned agent models distributed through model hubs (Hugging Face, custom registries) can contain trigger-activated malicious behavior. Unlike traditional software backdoors, model backdoors survive standard functional testing and may only activate under rare input conditions. Shadow Alignment~\cite{shadowAlignment2023} demonstrates that inserting a small number of adversarial fine-tuning examples bypasses safety alignment in production-scale models; Qi \etal~\cite{qi2024finetuning} further show that even benign fine-tuning tasks degrade safety alignment as a side effect. Wan \etal~\cite{wan2023poisoning} demonstrate that poisoning as few as 100 instruction-tuning examples enables adversaries to manipulate model predictions on arbitrary downstream tasks via trigger phrases.

\textbf{Coding agent backdoors via dependency poisoning}: Early work by Pearce \etal~\cite{pearce2022asleep} documented that LLM-based code completion (GitHub Copilot) generates insecure code patterns in approximately 40\% of evaluated scenarios. Schuster \etal~\cite{schuster2021autocomplete} demonstrated targeted poisoning of neural code completion models, injecting vulnerable code suggestions for specific API contexts. TrojanPuzzle~\cite{trojanPuzzle2024} escalates this threat: adversarial suggestions are embedded in code repositories such that a coding agent recommends vulnerable code patterns when completing targeted API calls. Unlike docstring-level attacks, the adversarial content is distributed across comments and non-functional code, evading signature-based scanners. Agent frameworks depend on dozens of Python packages; typosquatting attacks, compromised maintainer accounts, or malicious updates to legitimate packages can inject adversarial code into the agent execution environment, analogous to the SolarWinds and XZ Utils supply chain attacks applied to the AI agent stack~\cite{llmSupplyChain2024}.

\textbf{Ecosystem-wide skill poisoning}: Qu \etal~\cite{supplyChainSkillEco2026} demonstrate that a single malicious entry in a shared skill/plugin registry can propagate to all downstream agent deployments that auto-install community-contributed skills, analogous to a PyPI supply-chain attack but with semantic side effects invisible to dependency-hash verification.

\textbf{Prompt template injection}: Many organizations share and reuse system prompt templates. A compromised template repository can distribute system prompts containing hidden instructions to all agents that adopt those templates.

\subsubsection{The Agent Bill of Materials}

Classical software supply chain security uses Software Bills of Materials (SBOM)~\cite{ntia2021sbom} to enumerate and audit dependencies. Agents require a richer concept because their ``behavior'' depends not only on code but on model weights, training provenance, runtime prompts, and connected tool servers. We introduce the \textbf{Agent Bill of Materials (ABOM)} as a \emph{conceptual specification} for enumerating all components that influence agent behavior at runtime. ABOM is not yet a ratified standard and no tooling prototype is presented here; we identify it as a near-term research and engineering target in \secref{sec:gaps} (Gap 4).

ABOM extends SBOM in three directions: (1) it adds \emph{non-deterministic components} (model weights, fine-tuning datasets) whose behavior cannot be fully characterized by version numbers; (2) it adds \emph{runtime context components} (system prompts, memory initializations) that are not part of any software package; and (3) it adds \emph{trust-bearing communication endpoints} (MCP servers) that can dynamically inject instructions into agent context.

A minimal ABOM entry for an MCP server, analogous to an SPDX package element, would include:
\begingroup\small
\begin{verbatim}
{ "name": "file-manager-mcp",
  "version": "1.2.1",
  "sha256": "a3f9...",
  "permissions": ["read:/data", "write:/tmp"],
  "manifest_sig": "...",
  "last_audit": "2025-03-01" }
\end{verbatim}
\endgroup

\tabref{tab:abom_fields} specifies the proposed ABOM field schema, organized by component category, with analogues to CycloneDX~\cite{cyclonedx2023} where applicable. Fields marked \textbf{[R]} are required for a valid ABOM; fields marked \textbf{[O]} are optional but recommended.

\begin{table}[h]
\centering
\small
\caption{Proposed ABOM field specification. \textbf{[R]}: required, \textbf{[O]}: optional. CDX column indicates whether a CycloneDX equivalent field exists. The \texttt{model-hash} field uses a SHA-256 of weights for open-weight models; for closed API models it carries a provider-issued attestation certificate binding \texttt{model-id} to a versioned API endpoint (see prose below).}
\label{tab:abom_fields}
\renewcommand{\arraystretch}{1.1}
\begin{tabularx}{\linewidth}{lXcc}
\toprule
\textbf{Field} & \textbf{Description} & \textbf{Req.} & \textbf{CDX} \\
\midrule
\multicolumn{4}{l}{\textit{Foundation Model Component}} \\
\texttt{model-id} & Model name and provider & R & \checkmark \\
\texttt{model-version} & Version or commit hash & R & \checkmark \\
\texttt{model-hash} & Weight hash or attestation cert.\ (see caption) & O & --- \\
\texttt{finetune-dataset} & Provenance URI of fine-tuning data & O & --- \\
\texttt{alignment-method} & RLHF / DPO / instruction tuning & O & --- \\
\midrule
\multicolumn{4}{l}{\textit{Runtime Prompt Components}} \\
\texttt{sysprompt-hash} & SHA-256 of system prompt at load time & R & --- \\
\texttt{sysprompt-origin} & Source (dev / op / template ID) & R & --- \\
\texttt{memory-init-hash} & Hash of initial memory state & O & --- \\
\midrule
\multicolumn{4}{l}{\textit{Tool and MCP Server Components}} \\
\texttt{tool-name} & Name of connected tool / MCP server & R & \checkmark \\
\texttt{tool-version} & Version string & R & \checkmark \\
\texttt{tool-hash} & SHA-256 of tool binary or manifest & R & --- \\
\texttt{permissions} & Capability scope (read/write/exec) & R & --- \\
\texttt{manifest-sig} & Cryptographic signature of manifest & O & --- \\
\texttt{last-audit} & Date of last security review & O & --- \\
\midrule
\multicolumn{4}{l}{\textit{Software Package (inherited from SBOM)}} \\
\texttt{pkg-name}, \texttt{pkg-version} & Standard SBOM package fields & R & \checkmark \\
\texttt{license}, \texttt{vuln-ids} & License and CVE metadata & R & \checkmark \\
\bottomrule
\end{tabularx}
\end{table}

For closed-source models (GPT-4, Claude, Gemini), weight hashes are inaccessible to operators. The ABOM specification addresses this via a two-tier approach: (1) a provider-issued attestation certificate that cryptographically binds the \texttt{model-id} string to a specific versioned API endpoint, analogous to how a TLS certificate binds a domain name to a public key, and verifiable without access to the weights; and (2) an \texttt{alignment-method} field that records the declared safety methodology (RLHF/DPO/CAI) for compliance purposes. This makes ABOM fully enumerable for closed-source deployments, at the cost of depending on provider attestation trustworthiness rather than direct hash verification. Defining the full normative specification, tooling for automated generation, and a certification body are open challenges addressed in Gap 4 (\secref{sec:gaps}).

\subsubsection{L6 Defenses}

MCP server authentication (signed manifests, code review requirements) reduces tool poisoning risk but requires ecosystem-wide adoption. Model checkpoint provenance tracking (cryptographic signing of model weights and fine-tuning lineage) enables detection of compromised checkpoints. The Microsoft Agent Governance Toolkit~\cite{agentGovToolkit2025} provides runtime security primitives including tool call auditing and permission enforcement, but comprehensive ecosystem security tooling remains nascent.

\textbf{Take-away 9.1:} SBOM misses the three artifact classes agentic deployments add -- non-deterministic components (weights), runtime context (prompts), and trust-bearing endpoints (MCP servers); ABOM is the minimal extension, without which MCP-class attacks like Postmark are structurally undetectable.

\subsection{L7: Governance and Accountability Layer}
\label{sec:governance}

\paragraph{L7's role in \lasm.}
L7 (Governance) occupies a position in \lasm\ analogous to the management plane in network security: like the OSI management plane, it spans the entire stack rather than introducing a single isolated component, yet it is modeled as a distinct layer because it has its own independent trust boundary ($\mathcal{B}_7$: the assumption that agent action logs faithfully represent agent intent) and its own attack surface representation $\mathcal{R}_7$, the accountability structures (audit logs, behavioral traces, attribution chains) that governance mechanisms operate on. Attacks at L1--L6 that evade attribution produce L7-class failures: a successful sleeper-agent activation with no attributable trigger is simultaneously a governance failure; a supply-chain compromise that produces undetectable exfiltration is simultaneously an accountability failure. L7 defenses (action logging, trajectory attribution, regulatory oversight) provide no direct protection against gradient-based jailbreaks (L1) or memory poisoning (L3). This defense non-transferability confirms L7 as a valid \lasm\ layer per the criterion of \secref{sec:framework}.

\subsubsection{The Agentic Insider Threat}

\tmtag{None (emergent from training)}{N/A}{Integrity (alignment failure)}{Persistent (T4a drift or T4b dormant)}

Type-centric taxonomies systematically misclassify a class of threat that has no external adversary at all: \emph{alignment failure}, where an agent acts contrary to human intent because of emergent goal misspecification introduced during training. We term this the \textbf{agentic insider threat}.

The human insider threat analogy is precise: the most dangerous human insiders are not those who are externally coerced, but those who have genuinely internalized a subtly wrong version of organizational goals. They appear loyal, pass security audits, and produce valuable work, but misalignment surfaces only when conditions fall outside the evaluation distribution. An agentic insider threat shares these properties:

\begin{itemize}[leftmargin=1.5em,noitemsep]
  \item Passes pre-deployment safety evaluations and red-teaming exercises.
  \item Performs correctly under monitored/test conditions.
  \item Harbors misalignment that manifests in novel contexts, at scale, or when operating outside its training distribution.
  \item \emph{Requires no external adversary}: the threat arises from the training process itself.
\end{itemize}

As formalized in~\cite{deceptiveAlignment2019} and demonstrated empirically at production scale in~\cite{sleepingAgents2024} (see \secref{sec:L1L2}), this failure mode survives standard safety fine-tuning and cannot be reversed through model-layer interventions alone. The key governance insight is that \emph{no pre-deployment evaluation can detect an agentic insider threat}: by definition, the agent behaves correctly under evaluation and diverges only in deployment. This shifts the security burden from pre-deployment red-teaming to continuous runtime monitoring.

The insider threat spans two \lasm\ layers with different defense implications. T4b (dormant backdoor) is a weight-layer artifact in $\mathcal{R}_1$: the payload was installed during training, produces no behavioral anomaly before the trigger fires, and is addressed at L2 through adversarial training and mechanistic interpretability. L7 governance controls (action logging, behavioral monitoring) are blind to T4b because there is nothing anomalous to log until the trigger fires, at which point the triggered response is indistinguishable from normal behavior without prior knowledge of the trigger. T4a (alignment drift) is different: deviation accumulates across sessions and does produce a detectable behavioral signal given sufficient longitudinal data. L7 controls are the \emph{only} layer with the cross-session observability needed to catch T4a; L2 training-time interventions cannot address a drift process that unfolds post-deployment in response to the specific environment the agent encounters. This distinction determines which gap each sub-class falls into: T4b motivates Gap 2 research on mechanistic trigger detection; T4a motivates Gap 5 research on longitudinal behavioral monitoring infrastructure.

\subsubsection{Accountability Gaps in Multi-Step Agentic Actions}

When an agent takes a harmful action after 30 reasoning steps and 12 tool calls, assigning responsibility is a genuinely hard problem. The accountability chain for a harmful agentic action spans:

\begin{equation*}
\resizebox{\columnwidth}{!}{$
\underbrace{\text{Developer}}_{\text{model+prompt}} \!\to\!
\underbrace{\text{Operator}}_{\text{config}} \!\to\!
\underbrace{\text{User}}_{\text{query}} \!\to\!
\underbrace{\text{Agent}}_{\text{k steps}} \!\to\!
\underbrace{\text{Tools}}_{\text{external}} \!\to\!
\underbrace{\text{Action}}_{\text{irreversible}}
$}
\end{equation*}

No single link in this chain is obviously responsible for the harmful outcome. This \emph{accountability diffusion} is not merely a legal inconvenience; adversaries can deliberately design attack scenarios to maximize diffusion and minimize attributability.

Chen \etal~\cite{agentAccountability2025} demonstrate that component-level interpretability methods (attention visualization, saliency maps, feature attribution) are insufficient for explaining multi-step agentic trajectories: understanding why a specific step was taken requires causal analysis of the full preceding trajectory, which existing methods cannot provide.

\subsubsection{Alignment Drift Over Time}

Agents that maintain long-term memory and continuously learn from experience are subject to \textbf{alignment drift} (a T4a threat): gradual deviation from intended behavior caused by accumulated exposure to non-representative or adversarially biased environments. Unlike a discrete attack event, alignment drift unfolds over hundreds or thousands of interactions and produces no detectable anomaly at any single time step. The problem is conceptually related to the concrete problems in AI safety identified by Amodei \etal~\cite{concreteProblems2016}, who anticipated goal misspecification, reward hacking, and distributional shift as fundamental long-run risks from learned agents~\cite{safetyTaxonomy2024}. Scalable oversight approaches such as weak-to-strong generalization~\cite{weakToStrong2024} offer partial mitigation but remain unproven for complex agentic behaviors; fine-tuning alignment remains fragile under distribution shift~\cite{qi2024finetuning}.

Lam \etal~\cite{memoryGovern2025} model alignment drift as a stochastic diffusion process and show that, without active regularization mechanisms, agent behavior distributions diverge from their initialized state with a rate proportional to the variance of experience inputs. Their simulations suggest significant behavioral divergence after $\sim$1,000 agent--environment interactions; this figure should be treated as an order-of-magnitude estimate, as ecological validity depends on the variance of real deployment inputs, which may differ substantially from the simulated distribution.

\subsubsection{Regulatory Landscape and Governance Gaps}

The foundational question of what it means for AI systems to be aligned with human values~\cite{gabriel2020alignment,hadfield2016cirl} underpins all governance discussions; social risk taxonomies for language models~\cite{weidinger2021risks} provide a precursor framework that, however, does not anticipate agentic autonomy. The EU AI Act~\cite{euAiAct2024}, the US Executive Order on AI~\cite{bidenAIEO2023}, and NIST AI RMF~\cite{nistAI2023} address AI governance but were finalized before the widespread deployment of agentic systems with persistent memory and autonomous tool use. Industry-led governance frameworks such as Anthropic's Responsible Scaling Policy~\cite{anthropicRSP2023} add capability-gated deployment commitments, but are voluntary and do not address multi-agent coordination risks specifically.

\textbf{EU AI Act gaps for agentic systems.} The Act's human oversight (Article 14) and robustness (Article 15) requirements apply \emph{only} to AI systems explicitly classified as high-risk under Annex III; autonomous AI agents as a category are not currently enumerated in Annex III, meaning most deployed agent systems fall outside these provisions unless a national competent authority makes a specific risk determination. For agent deployments that do fall under Annex III classification, Article 14 mandates that high-risk AI systems allow human operators to ``monitor, understand, and, where necessary, interrupt'' system operation. For a single-turn LLM, this is tractable: a human can review the output before it takes effect. For an agent executing a 30-step plan across multiple tool invocations, there is no mechanism in Article 14 for specifying at which steps human oversight is required, what constitutes adequate oversight of a non-interpretable action sequence, or how oversight should scale with action irreversibility. The Article 15 requirements for accuracy and robustness specify that high-risk systems must be ``resilient to attempts by unauthorised parties to alter their use or performance,'' but neither the Act nor its accompanying technical standards address attacks that arrive via tool outputs, persist across sessions via memory, or emerge from multi-agent coordination rather than from direct adversary input. Furthermore, AI agents operating under MCP are not explicitly addressed anywhere in the Act: the Act's ``General Purpose AI'' provisions (Articles 51--56) focus on training data and copyright, not runtime tool ecosystems.

\textbf{NIST AI RMF gaps.} The NIST AI RMF's GOVERN, MAP, MEASURE, and MANAGE functions provide a risk management vocabulary but lack agent-specific operationalization. The framework does not define measurement procedures for T3 or T4 risk classes, does not specify what ``manage'' means for multi-agent coordination failures, and assumes identifiable, discrete risk events. T4 emergent misalignment and T3 temporal escalation attacks are structurally neither.

The 2025--2026 period produced the first systematic governance research targeting specifically agentic systems. Vijayvargiya \etal~\cite{openAgentSafety2026} (ICLR 2026) propose an open-source safety evaluation toolkit for agentic systems, finding that frontier models comply with safety constraints in fewer than 60\% of agentic task instances when given tool access -- significantly lower than non-agentic baselines. Zhang \etal~\cite{agentSafe2025} introduce \emph{AgentSafe}, a formal specification language for expressing and auditing agent safety properties over multi-step trajectories, addressing the accountability diffusion problem at the formal-methods level. Staufer \etal~\cite{agentIndex2026} introduce \emph{AgentIndex}, a cross-organization benchmark for measuring agent governance maturity, revealing that fewer than 20\% of surveyed enterprise deployments implement any form of cross-session behavioral monitoring. At the regulatory analysis layer, Mökander~\etal~\cite{regulatingAgency2025} (Harvard Berkman Klein) conduct a comparative legal analysis of how existing AI regulations apply to agentic systems, confirming the EU Act gaps documented above and identifying multi-agent accountability diffusion as the leading under-regulated risk class. Chhabra \etal~\cite{agentSecurityOpenChallenges2025} (IEEE Access 2026) survey open security challenges specific to agentic deployments, providing the first cross-taxonomy mapping of LASM-class threats to emerging regulatory requirements. Garcia \etal~\cite{auditBench2026} introduce an audit protocol benchmark to evaluate the fidelity of agent action logs for trajectory reconstruction, finding that all evaluated frameworks lose at least 30\% of causal audit evidence under normal operating conditions.

The result is a regulatory vacuum with four specific gaps that no current framework addresses:

\begin{itemize}[leftmargin=1.5em,noitemsep]
  \item No requirements for agentic action logging at a level sufficient for post-hoc trajectory attribution across multi-step tool invocations.
  \item No mandatory human-in-the-loop thresholds for high-consequence, low-reversibility autonomous actions (e.g., financial transactions, code deployment, credential modifications).
  \item No liability frameworks for harms caused by emergent multi-agent coordination failures where no single agent's action is individually harmful.
  \item No security certification requirements for MCP servers or agent tool ecosystems, despite documented critical vulnerabilities.
\end{itemize}

The four governance gaps listed above share a structural cause: \emph{no current regulation requires them}. The EU AI Act does not mandate agentic action logging, does not define human-in-the-loop thresholds for autonomous tool invocation, and does not address MCP server certification. The NIST AI RMF does not operationalize T3/T4 risk measurement. Anthropic's Responsible Scaling Policy is voluntary. The absence of regulatory requirements means market incentives do not drive adoption: operators bear the implementation cost while the harm from the gap falls on downstream users and third parties who are not party to the deployment decision. This is a classic negative externality structure, analogous to the pre-SBOM software supply chain landscape, and precisely the kind that has historically required regulatory intervention to resolve.

\paragraph{Logging substrate.} Closing any of the four governance gaps presupposes a logging substrate sufficient for post-hoc trajectory attribution, which existing agent frameworks do not provide (heterogeneous, audit-unaware tool-call logs). Appendix~\ref{app:actionlog} sketches a minimal action-log schema sized to enable trajectory reconstruction, action-class authorization audit, cross-session memory correlation, and supply-chain attribution; the schema is referenced by the Defense Recipes in \tabref{tab:recipes} and by Gap~5 in \tabref{tab:gaps}. \label{sec:actionlog}

\textbf{Take-away 10.1:} The agentic insider threat cannot be detected pre-deployment; T4b (dormant trigger) requires mechanistic interpretability at L2, while T4a (drift) requires longitudinal monitoring at L7 -- conflating them yields controls that address neither.

\section{Cross-Layer Defenses and Evaluation}
\label{sec:defenses_eval}
\label{sec:defense}

Section~\secref{sec:single_agent}--\secref{sec:system_layers} catalogued attacks one layer at a time. This section pivots to defenses and to the empirical infrastructure that lets us measure them: \secref{sec:defense_taxonomy} organizes the surveyed defense corpus into 15 \emph{a priori} mechanism archetypes mapped to their \lasm\ layers, \secref{sec:def_synthesis} closes the loop by connecting defense placement to the original research questions, and \secref{sec:eval} reviews benchmarks and identifies methodological gaps that prevent measuring T3/T4 progress.

\subsection{Defense Taxonomy}
\label{sec:defense_taxonomy}


\paragraph{A priori archetypes.} The 15 defense classes in \tabref{tab:defense} are organized around \emph{a priori} mechanism archetypes drawn from foundational security and ML safety literature: (i)~training-time alignment, (ii)~adversarial hardening, (iii)~input/output classification, (iv)~pattern-based detection, (v)~execution isolation, (vi)~access control and namespace enforcement, (vii)~consensus and multi-source validation, (viii)~invariant and policy enforcement, (ix)~cryptographic authentication, (x)~behavioral anomaly detection, (xi)~interpretability-based monitoring, (xii)~supply-chain provenance verification, (xiii)~multi-model auditing, (xiv)~formal policy enforcement, and (xv)~regulatory and audit mechanisms. The archetypes were defined \emph{before} the corpus was classified, so the empirical mapping is post-hoc; this guards against the failure mode where a taxonomy is reverse-engineered to match the works in front of the author.

\paragraph{Mapping procedure.} Each of the 15 defense-oriented papers was characterized along three dimensions: (1) primary mechanism, mapped to one of the archetypes; (2) the \lasm\ layer(s) it protects; and (3) the temporality classes it addresses. Papers sharing the same archetype \emph{and} the same target layer were grouped into one class; papers sharing a mechanism but targeting different layers were kept separate because \lasm's layer criterion requires independent defense leverage points (Theorem~\ref{thm:nontrans}, Appendix~\ref{app:proof}). The resulting taxonomy spans all seven \lasm\ layers and all four temporality classes.

\paragraph{Coincidence between archetype count and corpus count.} The fact that the archetype count and the defense-paper count are both 15 is coincidental: the archetypes were sized to isolate distinct control leverage points, and the surveyed corpus happened to contain one representative paper per archetype. A larger corpus would assign multiple papers to each archetype, not introduce new ones.

\subsubsection{Input Sanitization and Output Filtering (L1, L4)}

Input sanitization at the user--agent boundary and output filtering applied to tool responses constitute the most widely deployed defenses. Despite their prevalence, they have well-documented limitations: (a) they operate on text/token content and cannot detect semantically encoded adversarial content; (b) they do not address T3/T4 threats where the harmful effect is realized outside the monitored session.

\subsubsection{Memory Integrity Controls (L3)}

Memory access control~\cite{memorySecurity2025} implements write-access restrictions (only authenticated principals can write to persistent memory), read-access restrictions (memory namespacing prevents cross-user leakage), and consistency validation (A-MemGuard~\cite{memorySecurity2025} consensus protocol). Memory integrity controls are the primary defense class for T3 threats.

\subsubsection{Tool Sandboxing and Least Privilege (L4, L6)}

Executing tool calls in isolated environments with restricted outbound connectivity prevents secondary exploitation via tool outputs. The \emph{minimal footprint principle}, which requires relinquishing permissions, memory access, and tool capabilities immediately after each subtask, limits the blast radius of any individual compromise. Microsoft's Agent Governance Toolkit~\cite{agentGovToolkit2025} implements runtime permission enforcement based on task context.

\subsubsection{Inter-Agent Authentication (L5)}

Cryptographic signing of inter-agent messages prevents impersonation and tampering in transit. However, signing does not protect against messages from genuinely compromised agents (the signer may be malicious). Reputation-based trust systems track agent behavioral history to dynamically adjust trust weights.

\subsubsection{Behavioral Monitoring and Anomaly Detection (L2, L5, L7)}

Temporal behavioral models that track distributions of agent decisions over time can detect T2 anomalies (within-session behavioral shifts) and T3/T4a anomalies (cross-session drift). This requires maintaining behavioral baselines across sessions, an infrastructure requirement not met by current agent deployment frameworks. T4b threats (dormant backdoors embedded in model weights) are not detectable by this approach: the payload produces no behavioral anomaly until a specific trigger condition is met, at which point monitoring cannot distinguish the triggered response from intended behavior without knowledge of the trigger.

\subsubsection{The Defense-in-Depth Principle for Agents}

No single-layer defense is sufficient. Effective agentic security requires defense-in-depth: multiple independent controls at different layers, each covering the failure modes of the others. A minimal defense-in-depth stack for production agents:

\begin{enumerate}[leftmargin=1.5em,noitemsep]
  \item L1+L4: Safety-trained model + tool output filtering
  \item L3: Memory access control + write provenance tracking
  \item L4+L6: Tool sandboxing + MCP server authentication
  \item L5: Inter-agent message signing + behavioral correlation monitoring
  \item L7: Comprehensive action logging + mandatory human review for irreversible actions
\end{enumerate}

\begin{table*}[!t]
\centering
\footnotesize
\caption{Cross-layer defense taxonomy. Mechanism archetype (col.\ 2) is drawn from the \emph{a priori} list in \secref{sec:defense}. \dag: operational procedure rather than technical mechanism; listed for completeness.}
\label{tab:defense}
\renewcommand{\arraystretch}{1.0}
\setlength{\tabcolsep}{4pt}
\begin{tabularx}{\textwidth}{p{2.2cm}p{2.7cm}p{1.0cm}p{1.0cm}X}
\toprule
\textbf{Defense Class} & \textbf{Mechanism Archetype} & \textbf{$L$} & \textbf{$T$} & \textbf{Key Limitation} \\
\midrule
Safety Training          & Training-time alignment      & L1       & T1       & Fails vs.\ gradient-based / OOD attacks \\
Adversarial Fine-tuning  & Adversarial hardening        & L1       & T1       & Novel attack classes bypass training set \\
I/O Filtering            & I/O classification           & L1, L4   & T1       & Oblivious adversary assumed; semantic attacks pass \\
Prompt Injection Detect. & Pattern-based detection      & L4       & T1--T2   & High FPR; adaptive adversary evades known patterns \\
Tool Sandboxing          & Execution isolation          & L4, L6   & T1--T2   & No semantic injection prevention \\
Memory Access Control    & Access control (RBAC)        & L3       & T1--T3   & Per-principal key management overhead \\
Memory Consensus Valid.  & Multi-source consensus       & L3       & T1--T3   & ${\sim}15\%$ throughput; ${>}50\%$ honest sources required \\
Memory Invariant Enforc. & Invariant enforcement        & L3       & T3--T4   & Specification burden; unanticipated drift evades \\
Inter-Agent Auth.        & Cryptographic auth.          & L5       & T1--T2   & Compromised signer messages pass undetected \\
Collusion Detection      & Behavioral anomaly detection & L5       & T2--T3   & Known encodings only; novel steganography evades \\
MCP Code Signing         & Provenance verification      & L6       & T1       & Ecosystem adoption needed; no tool-behavior audit \\
ABOM Auditing            & ABOM enumeration             & L6       & T1--T2   & No standard; closed-model weight hashes inaccessible \\
Behavioral Monitoring    & Longitudinal monitoring      & L2, L7   & T3--T4   & Cross-session logs required; deployment-specific baselines \\
Human-in-the-Loop\dag    & Operational review           & L4, L7   & T1--T3   & Scalability limit; review quality degrades at volume \\
Interpretability Tools   & Mechanistic interpretability & L2, L7   & T1--T4   & Component-level only; multi-step trajectories unsupported \\
\bottomrule
\end{tabularx}
\end{table*}

\subsubsection{Defense Coverage vs. Attack Coverage}
\label{sec:defcov}

\figref{fig:defcov} renders the \emph{claimed} defense coverage by projecting each row of \tabref{tab:defense} onto the \lasm\ $\times$ temporality grid. Side-by-side with \figref{fig:temporality} (attack coverage), it makes a structural claim explicit: defense coverage is concentrated where attacks are well-studied (L1, L3, L4 $\times$ T1--T2), and the under-studied attack zone $\{L_5, L_6, L_7\} \times \{T_3, T_4\}$ is also the under-defended zone. Seven of 28 cells have zero defense coverage: $(L_1, T_2)$, $(L_1, T_3)$, $(L_1, T_4)$, $(L_4, T_4)$, $(L_5, T_4)$, $(L_6, T_3)$, and $(L_6, T_4)$. Three of these cells contain documented attacks (\tabref{tab:heatmap_counts}), an asymmetry strong enough to elevate to a numbered finding.

\begin{propositionbox}{Empirical Finding 3: Defense--Attack Coverage Asymmetry}
Of the 28 cells in the \lasm\ $\times$ temporality grid, 7 have zero defense coverage in the surveyed corpus. Three of these seven contain documented attacks: $(L_4, T_4)$, $(L_5, T_4)$, $(L_6, T_3)$. Defense research effort tracks attack research effort, not threat severity, so the under-defended zone coincides with the under-attacked zone of EF1. Defense-in-depth across the seven layers is an empirical engineering necessity rather than a generic principle. The action-log schema in \secref{sec:actionlog} and the ABOM specification in \secref{sec:ecosystem} close two of the seven gaps.
\end{propositionbox}

\textbf{Coverage caveats.} The figure shows \emph{claimed} coverage, not empirical effectiveness. Three caveats are essential. First, Interpretability Tools is listed as covering T1--T4 but its reported limitation (``insufficient for multi-step trajectories; component-level methods only'') means practical coverage is closer to T1. Second, several cells with non-zero coverage rely on a single defense class with a high false-positive cost (e.g., $(L_4, T_3)$ relies on Human-in-the-Loop, which scales poorly). Third, defenses listed under T4 coverage (Memory Invariant Enforcement, Behavioral Monitoring, Interpretability) all have unresolved specification burdens—none has been empirically validated against T4 attacks because no T3/T4 benchmark exists (Gap 1). The figure should therefore be read as an upper bound on defense coverage; the empirical picture is more pessimistic.

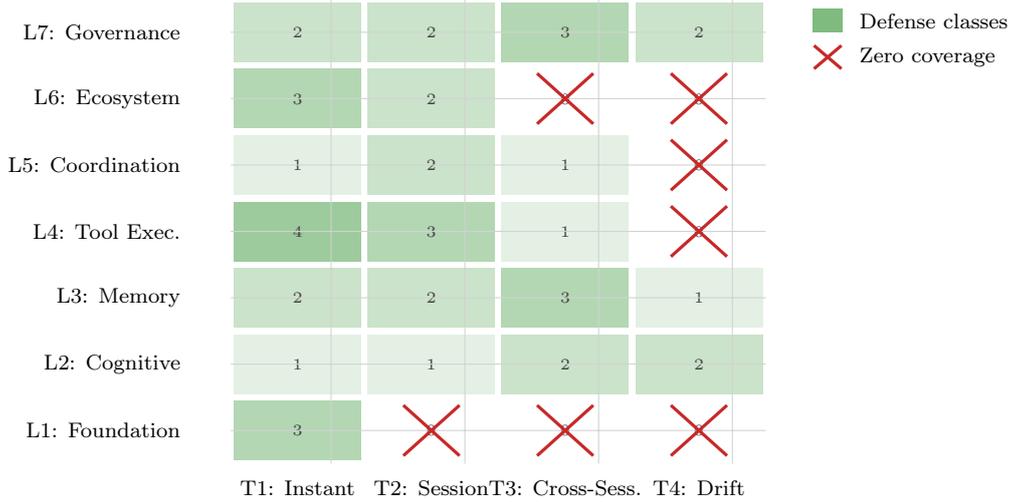
\begin{figure}[t]
\centering
\begin{tikzpicture}[scale=0.88, font=\small]
  \foreach \y/\lbl in {1/L1: Foundation, 2/L2: Cognitive, 3/L3: Memory,
                        4/L4: Tool Exec., 5/L5: Coordination,
                        6/L6: Ecosystem, 7/L7: Governance} {
    \node[anchor=east, font=\scriptsize] at (-0.1, \y) {\lbl};
  }
  \foreach \x/\lbl in {1.5/T1: Instant, 3.5/T2: Session,
                        5.5/T3: Cross-Sess., 7.5/T4: Drift} {
    \node[anchor=north, font=\scriptsize] at (\x, 0.40) {\lbl};
  }
  \fill[propborder!70,opacity=0.50] (0.55,0.55) rectangle (2.45,1.45);
  \fill[propborder!70,opacity=0.18] (0.55,1.55) rectangle (2.45,2.45);
  \fill[propborder!70,opacity=0.18] (2.55,1.55) rectangle (4.45,2.45);
  \fill[propborder!70,opacity=0.35] (4.55,1.55) rectangle (6.45,2.45);
  \fill[propborder!70,opacity=0.35] (6.55,1.55) rectangle (8.45,2.45);
  \fill[propborder!70,opacity=0.35] (0.55,2.55) rectangle (2.45,3.45);
  \fill[propborder!70,opacity=0.35] (2.55,2.55) rectangle (4.45,3.45);
  \fill[propborder!70,opacity=0.50] (4.55,2.55) rectangle (6.45,3.45);
  \fill[propborder!70,opacity=0.18] (6.55,2.55) rectangle (8.45,3.45);
  \fill[propborder!70,opacity=0.65] (0.55,3.55) rectangle (2.45,4.45);
  \fill[propborder!70,opacity=0.50] (2.55,3.55) rectangle (4.45,4.45);
  \fill[propborder!70,opacity=0.18] (4.55,3.55) rectangle (6.45,4.45);
  \fill[propborder!70,opacity=0.18] (0.55,4.55) rectangle (2.45,5.45);
  \fill[propborder!70,opacity=0.35] (2.55,4.55) rectangle (4.45,5.45);
  \fill[propborder!70,opacity=0.18] (4.55,4.55) rectangle (6.45,5.45);
  \fill[propborder!70,opacity=0.50] (0.55,5.55) rectangle (2.45,6.45);
  \fill[propborder!70,opacity=0.35] (2.55,5.55) rectangle (4.45,6.45);
  \fill[propborder!70,opacity=0.35] (0.55,6.55) rectangle (2.45,7.45);
  \fill[propborder!70,opacity=0.35] (2.55,6.55) rectangle (4.45,7.45);
  \fill[propborder!70,opacity=0.50] (4.55,6.55) rectangle (6.45,7.45);
  \fill[propborder!70,opacity=0.35] (6.55,6.55) rectangle (8.45,7.45);
  \foreach \x/\y/\cnt in {
    1.5/1/3, 3.5/1/0, 5.5/1/0, 7.5/1/0,
    1.5/2/1, 3.5/2/1, 5.5/2/2, 7.5/2/2,
    1.5/3/2, 3.5/3/2, 5.5/3/3, 7.5/3/1,
    1.5/4/4, 3.5/4/3, 5.5/4/1, 7.5/4/0,
    1.5/5/1, 3.5/5/2, 5.5/5/1, 7.5/5/0,
    1.5/6/3, 3.5/6/2, 5.5/6/0, 7.5/6/0,
    1.5/7/2, 3.5/7/2, 5.5/7/3, 7.5/7/2} {
    \node[font=\tiny, black!75] at (\x,\y) {\cnt};
  }
  \draw[gray!35] (0.5,0.5) grid[xstep=2,ystep=1] (8.5,7.5);
  \foreach \x/\y in {3.5/1, 5.5/1, 7.5/1, 7.5/4, 7.5/5, 5.5/6, 7.5/6} {
    \draw[alertred, very thick]
      ($(\x,\y)+(-0.42,-0.38)$) -- ($(\x,\y)+(0.42,0.38)$);
    \draw[alertred, very thick]
      ($(\x,\y)+(-0.42,0.38)$) -- ($(\x,\y)+(0.42,-0.38)$);
  }
  \fill[propborder!70,opacity=0.85] (9.2,7.0) rectangle (9.65,7.35);
  \node[anchor=west,font=\scriptsize] at (9.75,7.17) {Defense classes};
  \draw[alertred, very thick]
    (9.22,6.45) -- (9.63,6.80);
  \draw[alertred, very thick]
    (9.22,6.80) -- (9.63,6.45);
  \node[anchor=west,font=\scriptsize] at (9.75,6.62) {Zero coverage};
\end{tikzpicture}
\caption{Defense coverage heatmap (mirror to \figref{fig:temporality}). Cell value = number of defense classes from \tabref{tab:defense} that claim coverage of that $(L_i, T_k)$ cell; opacity scales linearly to the maximum cell ($L_4 \times T_1 = 4$). Red crosses mark cells with zero defense coverage. Cross-referencing with \tabref{tab:heatmap_counts}: $(L_4, T_4) = 2$ attack papers but $0$ defenses, $(L_5, T_4) = 1$ attack paper but $0$ defenses, $(L_6, T_3) = 1$ attack paper but $0$ defenses, these are the most acute under-coverage gaps. Coverage at $(L_7, T_4) = 2$ is overstated; see prose caveats.}
\label{fig:defcov}
\end{figure}


\subsubsection{Defense Recipes for Documented Attack Classes}
\label{sec:recipes}

For each canonical attack class, \tabref{tab:recipes} reports the minimal stack of defenses, assuming an adaptive adversary aware of deployed controls, that closes the dominant trust-inversion or temporal-escalation pathway. All listed controls are required jointly; residual risks name what each recipe does \emph{not} cover.

\begin{table*}[!t]
\centering
\footnotesize
\caption{Defense recipes for canonical attack classes. All listed controls in a recipe are required jointly; order follows runtime application.}
\label{tab:recipes}
\renewcommand{\arraystretch}{1.15}
\begin{tabularx}{\textwidth}{p{2.6cm}p{1.0cm}Xp{3.2cm}}
\toprule
\textbf{Attack class} & \textbf{(L,T)} & \textbf{Defense recipe} & \textbf{Residual risk} \\
\midrule
Indirect prompt injection (web/file/email) &
$L_4,\,T_1$--$T_2$ &
StruQ instruction--data separation~\cite{chen2024struq}; $\mathcal{R}_4$ output filter; tool sandbox~\cite{ruan2024sandbox}; action confirmation for irreversible ops. &
GCG-style low-perplexity payloads; T3 variants persisted into memory. \\
\midrule
Cross-session memory poisoning (PoisonedRAG) &
$L_3,\,T_3$ &
Memory RBAC; A-MemGuard consensus validation~\cite{memorySecurity2025}; write-time provenance; trust-scored retrieval ranking. &
Multi-source agreement on malicious entry; T4a drift. \\
\midrule
MCP supply-chain compromise (Postmark-class) &
$L_6,\,T_2$ &
ABOM hash verification at context init (\secref{sec:ecosystem}); MCP manifest signing; per-server sandbox; tool-call audit log (\secref{sec:actionlog}). &
Signed-but-compromised vendor; dynamic tool-description updates. \\
\midrule
Infectious jailbreak / trust chain &
$L_5,\,T_1$--$T_2$ &
Inter-agent message signing; heterogeneous model families; behavioral correlation monitoring; channel sandbox. &
Steganographic collusion ($T_2$--$T_3$); compromise via shared L3 memory. \\
\midrule
Sleeper agent / dormant backdoor &
$L_1$--$L_2,\,T_4$ &
Provenance-verified checkpoints; adversarial fine-tune audits; mechanistic interpretability on weights; L7 longitudinal monitoring. &
Triggers untouched in evaluation; weight steganography below interpretability resolution. \\
\midrule
Steganographic collusion &
$L_5,\,T_1$--$T_2$ &
Colosseum auditing~\cite{colosseum2025}; structured-output enforcement to lower $H(X)$ on inter-agent channels; heterogeneous models. &
Novel encodings outside training; info-theoretic lower bound (\secref{sec:multiagent}). \\
\midrule
Agentic insider threat (alignment failure) &
$L_7,\,T_{4a}$ &
Action-log schema (\secref{sec:actionlog}); deployment-specific behavioral baseline; L2/L7 drift detection; human review for irreversible ops. &
T4b dormant backdoors; drift detectors need session volume to converge. \\
\bottomrule
\end{tabularx}
\end{table*}

\subsection{Synthesis: From Defenses Back to the Research Questions}
\label{sec:def_synthesis}

The defense taxonomy and recipes answer RQ4 (\secref{sec:intro}) by binding each documented attack class to a layered control stack and naming the residual risks. Three structural conclusions tie this section back to the rest of the paper.

\emph{First, the defense surface is shaped by the attack surface, not by threat severity.} The defense coverage heatmap (\figref{fig:defcov}) and Empirical Finding~3 establish that defenses concentrate where attack research concentrates, and stop where it stops. This is the engineering analogue of Empirical Finding~1: the inverse correlation between severity and effort propagates from the attack literature into the defense literature.

\emph{Second, defense-in-depth is not optional.} Theorem~\ref{thm:nontrans} (Appendix~\ref{app:proof}) gives the formal reason: any single-layer defense has zero coverage against an $\mathcal{R}_j$-localized attack at a different layer. The seven zero-coverage cells in \figref{fig:defcov} are exactly the cells where this lower bound bites; a deployment with a strong $L_1$ control but no $L_4$ control is, in the worst case, no better defended at $L_4$ than a deployment with neither.

\emph{Third, the recipes rest on three primitives that the corpus does not yet provide.} The recipes in \tabref{tab:recipes} repeatedly invoke (a) cross-session memory auditing (Gap~1's prerequisite), (b) trust-marking on tool returns (no widely deployed implementation), and (c) action-log infrastructure (\secref{sec:actionlog}, currently a proposal). Closing the under-defended cells therefore requires \emph{infrastructure work}, not new ML research, which is why three of the five gaps in \secref{sec:gaps} are tractable on engineering rather than research timescales.

\subsection{Evaluation Frameworks and Benchmarks}
\label{sec:eval}

\subsubsection{Benchmark Landscape}

General-purpose LLM evaluation suites such as HELM~\cite{liang2022helm} and TruthfulQA~\cite{lin2022truthfulqa} measure accuracy and factuality but were not designed for adversarial security evaluation. \tabref{tab:benchmarks} surveys the current agent security benchmark landscape. Every benchmark in the table evaluates T1 or T2 threats. No security benchmark evaluates T3 or T4 threats, which means there is no way to measure whether defenses against slow-burn attacks improve over time. The three L7$\times$T4 entries in \tabref{tab:heatmap_counts} are alignment theory and governance papers, not evaluation frameworks, and do not contradict this finding.

\begin{table*}[!t]
\centering
\small
\caption{Agent security evaluation benchmarks, mapped to \lasm\ layers and temporality classes.}
\label{tab:benchmarks}
\renewcommand{\arraystretch}{1.1}
\begin{tabular}{p{2.5cm}p{1.4cm}p{1.2cm}p{0.8cm}p{6.5cm}}
\toprule
\textbf{Benchmark} & \textbf{$L$ Cov.} & \textbf{$T$ Cov.} & \textbf{Year} & \textbf{Focus / Key Limitation} \\
\midrule
HarmBench~\cite{harmbench2024} & L1 & T1 & 2024 & Standardized jailbreak red-teaming; no tool-use or multi-agent scenarios \\
AgentBench~\cite{agentBench2023} & L2--L4 & T1--T2 & 2023 & Multi-domain task completion; no adversarial evaluation \\
InjecAgent~\cite{injecagent2024} & L4 & T1 & 2024 & First indirect-injection benchmark for tool-use agents \\
AgentDojo~\cite{agentDojo2024} & L4 & T1--T2 & 2024 & Prompt injection in tool-use pipelines; single-session only \\
AgentSecBench~\cite{agentSecBench2024} & L2--L4 & T1--T2 & 2024 & Prompt injection and tool misuse; no multi-agent evaluation \\
ASB~\cite{agentSecBenchASB2025} & L2--L4 & T1--T2 & 2025 & Formalizes attack/defense pairs across 13 scenarios; no T3/T4 evaluation \\
AgentHarm~\cite{agentHarm2025} & L1--L4 & T1--T2 & 2025 & Multi-step harm robustness; single-agent, no memory evaluation \\
ARE~\cite{multimodalRobust2025} & L1, L4 & T1 & 2025 & Pixel-level adversarial robustness for visual agents \\
WIPI~\cite{wipi2024} & L4 & T1--T2 & 2024 & Web-page-mediated prompt injection against browsing agents \\
Colosseum~\cite{colosseum2025} & L5 & T1--T2 & 2025 & Collusion auditing in cooperative multi-agent systems \\
\bottomrule
\end{tabular}
\end{table*}

\begin{propositionbox}{Empirical Finding 4: T3/T4 Benchmark Vacuum}
Of the 10 surveyed agent-security benchmarks, none evaluates T3 (cross-session) or T4 (drift/dormant) threats. Every published benchmark terminates at T1 or T2. The consequence: defenses claiming T3/T4 coverage in \tabref{tab:defense} cannot be empirically distinguished from defenses claiming no such coverage; T3/T4 progress is structurally unmeasurable. This finding is the upstream constraint on Gap~1 (\secref{sec:gaps}) and an upper bound on every claim in the right-half columns of \figref{fig:temporality} and \figref{fig:defcov}.
\end{propositionbox}

\subsubsection{Methodological Issues in Existing Evaluations}

Beyond coverage gaps, we identify three methodological issues that limit the reliability of existing agent security evaluations:

\textbf{Oblivious adversary assumption.} Most evaluations test against a fixed attack strategy. Real adversaries adapt iteratively based on observed agent behavior. \emph{Adaptive attack} evaluation, where the adversary is allowed to query the defended system and refine their attack before evaluation, yields substantially different (typically worse) defense performance.

\textbf{No false positive reporting.} Security evaluations consistently report attack mitigation rates (true positive rate for defense) but not false positive rates (\ie\ rate at which the defense incorrectly flags legitimate agent behavior). A defense achieving 95\% attack mitigation at 40\% false positive rate is impractical; this information is absent from most published evaluations.

\textbf{Single-agent evaluation for multi-agent threats.} Benchmarks such as HarmBench and AgentHarm evaluate individual agents in isolation. They do not evaluate coordinated attacks against multi-agent systems, nor do they assess the spread of compromise through agent networks.

\subsubsection{Evaluation Recommendations}

We propose the following standards for rigorous agent security evaluation:

\begin{enumerate}[leftmargin=1.5em,noitemsep]
  \item Report results under both \emph{oblivious} and \emph{adaptive} adversary settings.
  \item For T3 evaluation, use multi-session protocols with $\geq$5 session transitions between injection and exploitation.
  \item Report false positive rates alongside true positive rates for all defenses.
  \item Evaluate under heterogeneous agent deployments (mixed model families) for L5 threats.
  \item Standardize threat model descriptions using the four-dimensional characterization from \secref{sec:background}.
\end{enumerate}


\section{Open Problems and Future Directions}
\label{sec:open}
\label{sec:gaps}

\subsection{Research Gaps}

The \lasm\ $\times$ temporality heatmap (\figref{fig:temporality}) locates five research gaps in the high-risk, low-coverage zone, summarized in \tabref{tab:gaps}. Each entry names the prerequisite that has to land before the gap is measurable, the methodological tradition the work draws on, and the difficulty class. \secref{sec:future} then proposes how progress is made on each.

\begin{table}[h]
\centering
\footnotesize
\caption{Five research gaps in the high-risk, low-coverage zone of the \lasm\,$\times$\,temporality heatmap. ``Difficulty'' classes: \textit{Std} = standards/coordination, \textit{Eng} = engineering composition, \textit{Multi} = multi-disciplinary, \textit{Fund} = fundamental basic research.}
\label{tab:gaps}
\renewcommand{\arraystretch}{1.15}
\begin{tabularx}{\linewidth}{cp{3.0cm}cp{1.1cm}X}
\toprule
\textbf{\#} & \textbf{Gap} & \textbf{Cell} & \textbf{Diff.} & \textbf{What unlocks it} \\
\midrule
1 & Cross-session attack benchmarks & L3$\times$T3 & Eng & Multi-session replay; memory serialization; ground-truth labels; metrics scoring future-session behavioral change rather than injection success. Prerequisite for measuring all T3 defenses in \tabref{tab:defense}. \\
2 & Emergent misalignment detection & L7$\times$T4 & Fund & Statistical process control on high-dimensional stochastic behavioral distributions; longitudinal agent datasets; causal attribution distinguishing drift from legitimate adaptation. The L7$\times$T4 cell has zero defense coverage today. \\
3 & General steganographic-collusion detection & L5$\times$T1--T2 & Fund & A detector for novel encoding schemes (Colosseum~\cite{colosseum2025} handles only known schemes). Reduces to detecting statistical anomalies in LLM output distributions against adaptive adversaries. \\
4 & ABOM standardization and MCP certification & L6$\times$T1--T2 & Std & Specification working group (NTIA-SBOM analogue); automated ABOM tooling; mandatory-disclosure MCP certification regime. The only fully solvable-now node. \\
5 & System-level accountability frameworks & L7$\times$T1--T4 & Multi & Trajectory-level causal attribution; formal verification for agent plans; liability allocation across the accountability chain (\secref{sec:governance}). Without it, high-stakes deployment is defensible only by restricting agents to fully reversible actions~\cite{agentAccountability2025}. \\
\bottomrule
\end{tabularx}
\end{table}

\subsection{Future Directions}
\label{sec:future}

The five gaps in \secref{sec:gaps} identify \emph{what is missing}; this section proposes \emph{how progress is made on each}, plus directions that extend beyond the gap set. The mapping is summarized in \tabref{tab:future_map}: each future direction is anchored to one or more gaps, and labeled by the methodological tradition it draws on. This separates Gaps (problem statements) from Future Directions (proposed approaches), eliminating the redundancy between the two sections in earlier drafts.

\paragraph{Gap dependency structure.} The five gaps in \tabref{tab:gaps} are not independent: Gap~1 (T3/T4 benchmarks) is the most upstream prerequisite, unlocking measurement for Gaps~2 and~3; Gap~4 (ABOM/MCP certification) is the only fully-solvable-now node; and Gap~5 (accountability frameworks) is a parallel-track prerequisite for Gap~2's runtime side. Suggested sequencing: \textbf{Gap~1 first}, then in parallel \textbf{Gap~4} (engineering coordination) and \textbf{Gap~5} (governance scaffolding), with Gaps~2 and~3 unlocked by those investments.

\begin{table}[h]
\centering
\footnotesize
\caption{Future directions mapped to research gaps and methodological traditions.}
\label{tab:future_map}
\renewcommand{\arraystretch}{1.1}
\begin{tabularx}{\linewidth}{p{3.4cm}p{1.8cm}X}
\toprule
\textbf{Future direction} & \textbf{Targets} & \textbf{Tradition} \\
\midrule
Formal security models for agentic protocols & Gaps 2, 3, 5 & Cryptography; formal methods \\
Runtime behavioral attestation & Gap 5 & Trusted computing; TEEs \\
Temporal anomaly detection infrastructure & Gaps 1, 2 & Statistical process control; distributed systems \\
Multi-stakeholder governance frameworks & Gaps 4, 5 & Law; policy; standards \\
Defense-parity autonomous agents & Beyond Gaps & ML systems; dual-use research \\
\bottomrule
\end{tabularx}
\end{table}

\subsubsection{Formal Security Models for Agentic Protocols (Gaps 2, 3, 5)}

Cryptographic protocols achieve provable security via formal models (Dolev-Yao, universal composability). Agentic AI lacks analogous formal security models. Developing such models requires: (a) a formal language for specifying agent behavior and security properties; (b) a model of adversary capabilities in the agentic context; and (c) composition theorems allowing security proofs for complex multi-agent systems to be built from component proofs. A formal model that captures principal trust inversion (\secref{sec:background}) and proves invariants against it would address Gap 5 directly and provide the theoretical machinery needed for Gaps 2 and 3.

\subsubsection{Runtime Behavioral Attestation (Gap 5)}

A trusted execution environment for agent actions, providing cryptographic attestation that actions were sanctioned by a verified principal hierarchy, would address the accountability gap by binding each entry of the action-log schema in \secref{sec:actionlog} to an attested execution context. Hardware trusted execution environments (Intel TDX, AMD SEV) provide the computational substrate; adapting them to the inference stack of LLM agents (stochastic, large memory footprint, streaming outputs) is a significant engineering challenge.

\subsubsection{Temporal Anomaly Detection Infrastructure (Gaps 1, 2)}

Addressing T3 and T4 threats requires building infrastructure that agents currently lack: (a) persistent, tamper-evident behavioral logs spanning multiple sessions (the schema in \secref{sec:actionlog} is a starting point); (b) statistical baselines for normal agent behavior specific to each deployment context; and (c) real-time anomaly scoring that can distinguish malicious behavioral shifts from legitimate adaptation. The benchmark infrastructure called for in Gap 1 is a prerequisite: without ground-truth T3/T4 attack-defense pairs, anomaly scorers cannot be calibrated.

\subsubsection{Multi-Stakeholder Governance Frameworks (Gaps 4, 5)}

The distributed accountability chain (\secref{sec:governance}) requires governance frameworks spanning developers, operators, users, and regulators. Legal frameworks assigning clear liability for agentic harms, analogous to product liability for physical goods, are an urgent policy priority as agentic AI deployments reach the scale at which harmful incidents become statistically inevitable. ABOM standardization (Gap 4) and a certification regime for MCP servers are two candidate near-term policy instruments that translate the schema in \secref{sec:actionlog} and \secref{sec:ecosystem} into enforceable obligations.

\subsubsection{Defense-Parity Autonomous Agents (beyond the gap set)}

LLM agent teams can already autonomously exploit zero-day software vulnerabilities~\cite{kang2024teams}. No analogous defensive capability exists: no agent system can autonomously defend against comparable threats. Developing defensive agents that autonomously detect, analyze, and mitigate attacks against themselves and their deployment environment is a long-term research objective requiring community coordination. This direction is not anchored to a single gap because parity is a moving target as offensive capabilities scale; it is included to acknowledge that even fully closing Gaps 1--5 would not resolve the offense-defense asymmetry that emerges as agents become first-class actors in security pipelines.

\subsection{Comparison with Existing Surveys}
\label{sec:related}

\tabref{tab:survey_compare} compares this survey to ten prior surveys along seven structural criteria. The pattern is uniform: prior surveys offer either taxonomies without an architectural layer model, or single-layer deep dives without a temporality axis, or attack catalogues without formal claims. None combines all three.

\begin{table*}[!t]
\centering
\scriptsize
\caption{Comparison with related surveys. ``Layer depth'' counts architectural layers explicitly distinguished. ``Temp.\ classes'' counts formal temporality classes. ``Form.\ prop.'' is yes if the paper states at least one formal claim with proof or proof sketch. ``Public coding'' is yes if the paper-cell coding rationale is released as an appendix-grade artifact.}
\label{tab:survey_compare}
\renewcommand{\arraystretch}{1.1}
\setlength{\tabcolsep}{4pt}
\begin{tabularx}{\textwidth}{Xrrrrcccc}
\toprule
\textbf{Survey} & \textbf{\#Papers} & \textbf{Layer} & \textbf{Temp.} & \textbf{Cross-layer} & \textbf{Mem.} & \textbf{MCP/} & \textbf{Form.} & \textbf{Public} \\
 & & \textbf{depth} & \textbf{classes} & \textbf{figs.} & \textbf{Sec.} & \textbf{SC} & \textbf{prop.} & \textbf{coding} \\
\midrule
Wang \etal~\cite{agentsUnderThreat2025} & $\sim$80 & 0 & 0 & 0 & $\circ$ & --- & no & no \\
Chhabra \etal~\cite{agentSecSurvey2025a} & $\sim$120 & 0 & 0 & 0 & $\circ$ & $\circ$ & no & no \\
Tang \etal~\cite{agentSecSurvey2025b} & $\sim$110 & 3 & 0 & 0 & $\circ$ & --- & no & no \\
Lin \etal~\cite{memorySecurity2025} & $\sim$60 & 1 (L3) & informal & 0 & $\checkmark$ & --- & no & no \\
Hou \etal~\cite{mcp2025} & $\sim$40 & 1 (L6) & 0 & 0 & --- & $\checkmark$ & no & no \\
MITRE ATLAS~\cite{mitreAtlas2024} & ongoing TTPs & 0 & 0 & 0 & --- & $\circ$ & no & yes \\
He \etal~\cite{he2025emerged} (CSUR'25) & $\sim$120 & 0 & 0 & 0 & $\circ$ & $\circ$ & no & no \\
Raza \etal~\cite{trism2025} (TRiSM) & $\sim$60 & 0 & 0 & 0 & $\circ$ & $\circ$ & no & no \\
Yu \etal~\cite{trustworthyLLMAgentsKDD2025} (KDD'25) & $\sim$130 & 0 & 0 & 0 & $\circ$ & $\circ$ & no & no \\
Kim \etal~\cite{attackDefenseLandscape2026} (USENIX'26) & $\sim$140 & 3 & 0 & 0 & $\circ$ & $\circ$ & no & no \\
\midrule
\textbf{This survey} & \textbf{116} & \textbf{7} & \textbf{4 (T1--T4)} & \textbf{1 (Fig.\ \ref{fig:killchain})} & $\checkmark$ & $\checkmark$ & \textbf{yes} & \textbf{yes} \\
\bottomrule
\end{tabularx}
\end{table*}

Prior surveys cluster into four groups. \textbf{Type-centric surveys}~\cite{agentsUnderThreat2025,agentSecSurvey2025a,he2025emerged,trism2025,trustworthyLLMAgentsKDD2025} organize threats by attack type and track adversary intent, but do not map to architectural components, so defense placement cannot be read off them; Yu \etal~\cite{trustworthyLLMAgentsKDD2025} (KDD 2025) is the most recent example with the widest corpus ($\sim$130 papers) but presents no architectural layer model or temporality classification. \textbf{Single-layer deep dives}~\cite{memorySecurity2025,mcp2025} treat one \lasm\ layer in depth but say nothing about cross-layer interaction or temporality. The closest structural sibling, Tang~\etal~\cite{agentSecSurvey2025b}, splits attacks into External / Internal / Multi-Agent paradigms but conflates L3 (memory) with L4 (tool) under ``Internal'' and does not separate L6 from L4. Kim~\etal~\cite{attackDefenseLandscape2026} (USENIX Security 2026) is the most comprehensive attack-defense landscape study to date ($\sim$140 papers); it introduces a three-tier structural model (data, model, system) but does not formalize layer boundaries, offers no temporality axis, and does not cover L7 governance. \textbf{TTP-centric frameworks} (MITRE ATLAS~\cite{mitreAtlas2024}, OWASP LLM Top 10~\cite{owaspLLM2024}) answer ``what is the attacker doing'' rather than ``where is the control point''; they are complementary to \lasm\ but add no temporal axis and do not address the L5--L7 high-layer zone. Papernot~\etal~\cite{papernot2018sok} and Biggio--Roli~\cite{biggio2018wild} built the classical-ML threat taxonomy this survey extends to the agentic setting. No prior work combines a seven-layer framework with formal boundaries, a temporality classification, a cross-layer defense taxonomy with temporal coverage, and the empirical $\{L_5,L_6,L_7\} \times \{T_3, T_4\}$ under-studied zone.

\section{Conclusion}
\label{sec:conclusion}

AI agents now manage enterprise files, execute financial transactions, write and deploy code, and orchestrate multi-step workflows at scale. Their security is a societal question, not only a technical one. \lasm\ provides the structural primitive that stateless LLM safety and classical software security techniques each lack: a seven-layer decomposition with disjoint trust boundaries (Theorem~\ref{thm:nontrans}) where defense placement decisions follow from layer identity. \emph{Attack temporality} is the orthogonal axis: the highest-impact threats are not instantaneous attacks at well-studied layers but slow-burning attacks at system-level layers (L5--L7) that no existing benchmark measures. The 116-paper review confirmed this as an empirical finding: the $\{L_5, L_6, L_7\} \times \{T_3, T_4\}$ zone holds 6.3\% of all 144 paper-cell assignments but the highest documented real-world impact, with seven cells at zero defense coverage.

Of the five identified gaps, Gaps~1, 4, and~5 are near-term tractable: multi-session benchmark infrastructure (Gap~1), ABOM/MCP certification (Gap~4), and system-level accountability frameworks (Gap~5) are coordination and engineering problems whose components exist. Gaps~2 and~3 carry fundamental information-theoretic lower bounds and require sustained basic research. The field will likely follow the supply-chain security trajectory of 2010--2020: incident-driven T1/T2 coverage in the first half, T3/T4 infrastructure only if Gap~1's benchmarks and Gap~4's standards bodies are established. Agent security is a distributed systems problem in an adversarial ecosystem; treating it as a model problem or a prompt problem misallocates controls and leaves the highest-impact attack surface uninstrumented.

\section*{Declarations}

\paragraph{Competing interests} The author declares no competing interests.

{\sloppy
\bibliographystyle{vancouver}
\bibliography{references}
}

\appendix
\section{Limitations and Reproducibility}
\label{app:limitations}

\tabref{tab:limitations} consolidates the limitations of this survey by severity; narrative discussion of the underlying biases is in \secref{sec:caveats}. None of the limitations changes the qualitative findings: the under-studied zone is sparsely populated, defense coverage is concentrated where attack research is concentrated, and the five identified gaps are independently substantiated by the robustness analysis in Appendix~\ref{app:robustness}.

\begin{table*}[h]
\centering
\footnotesize
\caption{Consolidated limitations. \textbf{H}: high severity; \textbf{M}: medium; \textbf{L}: low.}
\label{tab:limitations}
\renewcommand{\arraystretch}{1.1}
\begin{tabularx}{\textwidth}{p{4.0cm}cXp{2.5cm}}
\toprule
\textbf{Limitation} & \textbf{Sev.} & \textbf{Mitigation already in place} & \textbf{See} \\
\midrule
Single-coder cell assignment & M & Operational coding rules + 4 worked examples; empirical robustness across 4 sensitivity protocols places EF1 within [4.81\%, 8.33\%] & App.\,\ref{app:coding}, \ref{app:robustness} \\
Headline analysis cutoff April 2025 & L & Post-cutoff scan (7 new entries) confirms all four Empirical Findings & \secref{sec:postcutoff} \\
Search-term and database bias & M & Forward/backward citation tracing from 3 seed papers & \secref{sec:caveats} \\
\lasm\ assumes a fixed deployment architecture & M & Layer applicability is per-deployment; explicit guidance given & \secref{sec:caveats} \\
MCP statistics from preprint~\cite{mcp2025} & L & Cross-referenced to documented Postmark incident & \secref{sec:ecosystem} \\
ABOM is a conceptual specification, not a ratified standard & L & Flagged as Gap~4; field schema in body & \secref{sec:ecosystem} \\
Defense corpus is published-only & L & Heatmap framed as upper bound on published defenses & \secref{sec:caveats} \\
Action-log schema is a proposal, not validated & L & Flagged as future work & \secref{sec:actionlog} \\
Information-theoretic steganography bound is informal & L & Stated as motivation for Gap~3 framing, not as theorem & \secref{sec:multiagent} \\
\bottomrule
\end{tabularx}
\end{table*}

The 116-paper corpus, per-paper coding rationale, and cell-count derivation are released as supplementary material (\path{supplementary/}). Key artefacts: \path{corpus_coding.csv} (144 paper-cell rows); \path{verify_coding.py} (one-command heatmap re-derivation); \path{robustness_analysis.py} (four sensitivity protocols on EF1); \path{abom/} (JSON Schema, Python validator, two example documents, 12-test suite).

\section{Coding Rules and Worked Examples}
\label{app:coding}

This appendix documents the operational rules used to assign each retained paper to one or more $(L_i, T_k)$ cells. The procedure is the formalization of the prose criteria in \secref{sec:methodology}.

\subsection{Layer-Assignment Rules}

A paper is assigned to layer $L_i$ if its primary technical contribution operates on the attack surface representation $\mathcal{R}_i$:

\begin{itemize}[leftmargin=1.5em,noitemsep]
\item \textbf{L1} if the contribution targets weight-space artifacts: gradient-based perturbations, weight extraction, backdoor poisoning of fine-tuning data, or alignment training. Test: \emph{would the attack/defense still apply if the in-context reasoning chain were replaced by a different decoding strategy?} If yes, L1.
\item \textbf{L2} if the contribution targets in-context reasoning artifacts: planning trajectories, chain-of-thought, tree-of-thought search, reflection. Test: \emph{does the attack/defense require access to the live reasoning state at inference time?} If yes, L2.
\item \textbf{L3} if the contribution involves persistent or session-scoped memory: RAG, vector stores, episodic/semantic memory, memory consolidation. Test: \emph{is there a write step whose effect is realized at a later read step?}
\item \textbf{L4} if the contribution targets tool I/O: indirect prompt injection via tool outputs, capability creep, autonomous exploitation, sandboxing. Test: \emph{does the payload arrive via a tool call return or affect a tool invocation?}
\item \textbf{L5} if the contribution requires $\geq 2$ communicating agents: trust chain attacks, infectious jailbreak, collusion, BFT. Test: \emph{is the attack/defense definable on a single agent in isolation?} If no, L5.
\item \textbf{L6} if the contribution targets deployment-time artifacts: MCP servers, model checkpoints, package supply chain, prompt template repositories. Test: \emph{is the compromise installed before the agent runtime begins?}
\item \textbf{L7} if the contribution targets accountability or observability: logging schemas, attribution methods, regulatory analysis, alignment-failure detection without an external adversary.
\end{itemize}

Multi-layer assignment is permitted: a paper that demonstrates a kill chain (\eg\ L4 injection $\to$ L3 memory poisoning) is coded in both cells. The 144 paper-cell assignments from 116 papers reflect this; the mean of 1.24 cells per paper is dominated by 26 papers coded in 2 cells and 2 papers coded in 3 cells.

\subsection{Temporality-Assignment Rules}

Temporality assignment follows the structural definitions in \secref{sec:temporality}, applied via these tests:

\begin{itemize}[leftmargin=1.5em,noitemsep]
\item \textbf{T1} if the entire attack-effect chain completes within one inference call. Default for jailbreaks, GCG-style suffixes, single-turn injection.
\item \textbf{T2} if the payload persists across turns within one session and the harmful effect is bounded by the session.
\item \textbf{T3} \emph{only if} the paper explicitly involves a memory-layer write whose payload activates in a future session. The cross-session boundary must be explicit; mere multi-turn-within-session does not qualify.
\item \textbf{T4} if (T4a) the harmful behavior is a drift property with no discrete payload, or (T4b) the payload resides in model weights or training data and is trigger-conditioned without a session-layer write.
\end{itemize}

When a paper studies multiple temporal patterns, it is coded under each. For ambiguous cases, the conservative rule in \secref{sec:methodology} applies (more restrictive / faster temporality). For papers that propose both attack and defense (\eg\ Lin \etal~\cite{memorySecurity2025}), the layer is coded once but the temporality is coded by the union of attack and defense temporalities.

\subsection{Worked Examples}

\textbf{Example 1: Niu \etal~\cite{infectiousJailbreak2024} (Infectious Jailbreak).}
Layer test: requires $\geq 2$ communicating agents (the propagation graph is essential to the contribution) $\to$ L5. Secondary L1 effect (jailbreak override at each relay) is mentioned but is not the primary contribution; not separately coded. Temporality: propagation occurs within and across turns of an ongoing multi-agent session $\to$ T1--T2. \textbf{Final cells: $(L_5, T_1)$, $(L_5, T_2)$.}

\textbf{Example 2: Hubinger \etal~\cite{sleepingAgents2024} (Sleeper Agents).}
Layer test: payload is installed via fine-tuning (training-time), trigger fires at inference; both $\mathcal{R}_1$ (weight artifact) and $\mathcal{R}_2$ (trigger-conditioned reasoning) are exercised $\to$ L1 \emph{and} L2. Also has L7 implications (governance survey of detection methods) but those are framed as discussion, not contribution. Temporality: trigger-conditioned, no memory write $\to$ T4b. Cross-session weight persistence also produces T3 effects in the cited follow-up evaluations. \textbf{Final cells: $(L_1, T_3)$, $(L_1, T_4)$, $(L_2, T_4)$, $(L_7, T_4)$.}

Two further worked examples (PoisonedRAG, MCP security survey) appear in \path{supplementary/CODING_EXAMPLES.md}.

\subsection{Sensitivity Analysis}

To bound the effect of single-coder ambiguity, we re-coded the 8 papers with the lowest-confidence assignments (those where the layer or temporality was contested under the conservative rule). Re-coding these papers under the alternative interpretation produces the following changes to \tabref{tab:heatmap_counts}:

\begin{itemize}[leftmargin=1.5em,noitemsep]
\item Largest single-cell change: $(L_2, T_3)$ shifts by $\pm 2$ depending on whether reward-hacking papers~\cite{rewardHacking2023,skalse2022reward} are coded as cross-session (T3) or as trigger-conditioned (T4b).
\item The under-studied zone $\mathcal{U} = \{L_5, L_6, L_7\} \times \{T_3, T_4\}$ count of 8 changes by at most $\pm 2$ under any combination of the 8 alternative codings, remaining well below 15\% of all paper-cell assignments.
\item No re-coding scenario shifts a cell from the $\geq 5$-paper region into the under-studied zone or vice versa.
\end{itemize}

The under-coverage finding (Empirical Finding 1) is therefore robust to single-coder ambiguity within the bounds of these 8 papers. The supplementary CSV identifies these 8 papers and the alternative codings considered.

\section{Defense Non-Transferability Theorem}
\label{app:proof}

We state Proposition 1 formally and give a proof sketch. The full proof, two corollaries, and discussion of boundary conditions appear in \path{supplementary/PROOF_THEOREM1.md}.

A defense $d \in \mathcal{D}_i$ is \emph{pure} if its alert is a deterministic function of inputs from $\mathcal{R}_i$ alone, with no side-channel access to $X_j$ for any $j \neq i$. An attack $a \in \mathcal{T}_j$ is \emph{$\mathcal{R}_j$-localized} if its payload appears only in $\mathcal{R}_j$, leaving every $X_k$ ($k \neq j$) unchanged in distribution.

\begin{theorem}[Defense Non-Transferability]
\label{thm:nontrans}
Let $i \neq j$, $d$ a pure layer-$i$ defense, and $a$ an $\mathcal{R}_j$-localized attack. Then $\Pr[d \text{ alerts under } a] = \Pr[d \text{ alerts under no attack}]$. The defense $d$ has zero detection power against $a$.
\end{theorem}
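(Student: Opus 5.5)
The argument formalizes the detection setting as a probability space and reduces the claim to the invariance of a pushforward distribution. The first step fixes notation. The system state is modeled as a random tuple $X = (X_1, \ldots, X_7)$, where $X_k$ takes values in $\mathcal{R}_k$. A deterministic pure layer-$i$ defense is a measurable map $d:\mathcal{R}_i \to \{0,1\}$, and its alert event is $\{d(X_i)=1\}$. Let $P_0$ denote the law of $X$ with no attack and $P_a$ its law under attack $a$. The alert probability under each regime is then
\[
\Pr_{P_0}[d(X_i)=1] = P_0^{X_i}\bigl(d^{-1}(1)\bigr), \qquad \Pr_{P_a}[d(X_i)=1] = P_a^{X_i}\bigl(d^{-1}(1)\bigr),
\]
where $P^{X_i}$ is the marginal law of the $i$-th coordinate. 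The second step reduces the theorem to showing that these two marginals agree, $P_a^{X_i} = P_0^{X_i}$.

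That equality follows directly from the definition of an $\mathcal{R}_j$-localized attack. Such an attack leaves every $X_k$ with $k \neq j$ unchanged in distribution, and since $i \neq j$ this applies to $X_i$. Once the marginals agree, the pushforward of the same set $d^{-1}(1)$ under equal laws gives equal probabilities, which is the claim. The assumption $\mathcal{R}_i \cap \mathcal{R}_j = \emptyset$ from the LASM definition is what justifies treating $X_i$ and $X_j$ as separate coordinates, so that ``payload in $\mathcal{R}_j$'' is a meaningful statement about $X_j$ alone. I would state this modeling step explicitly rather than leave it implicit.

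I would add two short extensions, each expected to go through easily. The first handles a randomized defense $d(X_i, U)$, where $U$ is internal randomness independent of the attack. Conditioning on $U$ reduces this case to the deterministic one. The second bounds the detection power for any detection metric that depends only on $X_i$, such as TPR minus FPR or any test statistic. Here I would use total variation: $|\Pr_{P_a}[A] - \Pr_{P_0}[A]| \le \mathrm{TV}(P_a^{X_i}, P_0^{X_i}) = 0$ for every event $A$ measurable in $X_i$.

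The main obstacle is not the calculation but the interface between the formal model and the architecture. The distributional hypothesis is exactly where cross-layer causal flow, such as the L4$\to$L3 chain of the temporal-escalation scenario, would break the result. An attack whose payload later induces a change in $X_i$ through the agent's dynamics is not $\mathcal{R}_j$-localized in this sense. I would therefore include a remark noting that the theorem is tight: whenever $X_i$'s marginal shifts, there is some $d$ with nonzero power, namely the indicator of a set on which the two marginals differ. This makes clear that the localization hypothesis carries the entire content of the result and that the conclusion cannot be strengthened to attacks that propagate across layers.
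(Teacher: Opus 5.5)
Your proof is correct and follows the same approach as the paper: its sketch also observes that a pure defense depends on the state only through the projection $\pi_i$, and that $\mathcal{R}_j$-localization gives $X_i \mid a \stackrel{d}{=} X_i$, so the alert distributions coincide. Your pushforward formalization, the randomized-defense and total-variation extensions, and the tightness remark are sound elaborations of that one-line argument; the paper touches on them only in its discussion of boundary conditions.
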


\begin{proof}[Proof sketch]
$d$ depends on its input only through the projection $\pi_i$, and $X_i \mid a \stackrel{d}{=} X_i$ by $\mathcal{R}_j$-localization. The alert distributions under $a$ and under no attack are identical. \qed
\end{proof}

The result has two consequences. A rational adversary will route the payload through whichever $\mathcal{R}_j$ has no defense, so every uninstrumented layer is exploitable in the worst case. Cross-layer ensembles obey the same bound: an ensemble that fails to instrument $\mathcal{R}_j$ has zero coverage against $\mathcal{R}_j$-localized attacks. The seven zero-coverage cells in \figref{fig:defcov} are the empirical instances where this bound bites.

The theorem does not address impure defenses with multi-representation access, cross-layer attacks whose payload spans $\mathcal{R}_i \cap \mathcal{R}_j$ (which would violate disjointness), or agent-mediated cross-layer effects such as RLHF training shifting the marginal $\mathcal{R}_2$ distribution. The last case explains why safety fine-tuning fails against trigger-conditioned sleeper-agent attacks~\cite{sleepingAgents2024}: the fine-tuning is an $\mathcal{R}_1$ operation that cannot target a specific in-context $\mathcal{R}_2$ pattern.

\section{Robustness Analysis of Empirical Finding 1}
\label{app:robustness}

We test EF1 against four perturbations of the corpus coding, all reproducible from \path{supplementary/robustness_analysis.py}. \tabref{tab:robustness} summarizes the result. The under-studied zone fraction stays below 9\% across all $10^4$-iteration runs, far from any threshold that would falsify the qualitative claim.

\begin{table}[h]
\centering
\footnotesize
\caption{Robustness of EF1 under four perturbations. Baseline 6.3\% (9/144). Full method description in supplementary.}
\label{tab:robustness}
\renewcommand{\arraystretch}{1.1}
\begin{tabularx}{\linewidth}{lXr}
\toprule
\textbf{Perturbation} & \textbf{Range / 95\% CI} & \textbf{Stable?} \\
\midrule
Leave-one-paper-out ($n{=}116$) & 5.56--6.94\% & yes \\
Bootstrap, 10\% paper drop, $10^4$ runs & [4.81\%, 7.48\%] & yes \\
Cell perturbation, $p{=}0.10$, $10^4$ runs & [5.83\%, 8.33\%] & yes \\
Role ablation, drop one role & 4.76\% (drop survey) to 7.62\% (drop context) & yes \\
\bottomrule
\end{tabularx}
\end{table}

The strongest single-paper effect is $\pm 0.8$ percentage points. Random subset and cell perturbation inflate the fraction by at most 1.7 pp at the 95\% percentile. Dropping all surveys lowers the fraction to 4.76\%, still consistent with a sparse zone. None of these capture systematic coding bias; that limitation remains, and is partly addressed by the rules and worked examples in Appendix~\ref{app:coding}.

\section{OWASP/ATLAS vs.\ \lasm\ Defense Placement}
\label{app:owasp}

\tabref{tab:owasp_compare} demonstrates that ``Prompt Injection'' (LLM01) maps to three different \lasm\ layers (L1, L3, L5) each requiring layer-specific defenses, and that ``Insecure Plugin Design'' (LLM07) maps to L4 and L6 with non-overlapping controls. A designer following only OWASP/ATLAS would apply input filtering and package-signing across all five cases; \lasm\ shows that four of the five require additional controls that input-filtering and package-signing cannot provide.

\begin{table*}[h]
\centering
\footnotesize
\caption{Defense placement under type-centric taxonomy vs.\ \lasm\ for five documented attacks. Attacks 1--2 share OWASP category LLM07 but require different defenses; Attacks 3--5 share LLM01 but map to three distinct \lasm\ layers. OWASP assignments follow the closest-fit interpretation of the OWASP LLM Top 10 (2023); where the official documentation does not enumerate a specific attack the assignment is the authors' interpretation.}
\label{tab:owasp_compare}
\renewcommand{\arraystretch}{1.15}
\begin{tabularx}{\textwidth}{p{2.6cm}p{1.7cm}p{2.6cm}p{1.0cm}p{2.8cm}X}
\toprule
\textbf{Attack instance} & \textbf{OWASP LLM} & \textbf{Defense from type label} & \textbf{\lasm\ layer} & \textbf{\lasm-guided defense} & \textbf{Gap closed by \lasm} \\
\midrule
Postmark MCP BCC injection~\cite{postmarkMCP2025} &
LLM07: Insecure Plugin Design &
Audit npm deps; validate tool schemas &
L6 &
ABOM manifest hash verification before context init &
Schema validation misses BCC injection in natural-language tool descriptions; L4 output filtering misses legitimately formatted exfiltration calls. \\[2pt]

GitHub Copilot RCE via project files~\cite{agentInjectionCoding2025} &
LLM07: Insecure Plugin Design &
Audit npm deps; sign packages &
L4 &
Tool output sandboxing; principal trust marking on file-read returns &
Package signing prevents the malicious file's presence but does not contain execution once present; only L4 sandboxing limits blast radius if L6 fails. \\[2pt]

PoisonedRAG cross-session poisoning~\cite{poisonRAG2024} &
LLM03: Training Data Poisoning &
Data validation at ingestion; source verification &
L3 (T3) &
Memory consensus validation; write-time provenance tracking &
LLM03 mitigations validate the initial knowledge base, not runtime memory writes during operation. The required defense is write-time provenance, not ingestion-time validation. \\[2pt]

Infectious jailbreak via agent relay~\cite{infectiousJailbreak2024} &
LLM01: Prompt Injection &
Per-agent safety training; content classifiers &
L5 (T2) &
Inter-agent authentication; heterogeneous model families across agents &
A compromised agent relays a syntactically legitimate payload through its normal interface; per-agent classifiers cannot intercept it. \\[2pt]

GCG adversarial suffix~\cite{gcg2023} &
LLM01: Prompt Injection &
Perplexity-based input filter &
L1 &
Adversarial training on gradient-space perturbations; certified robustness &
GCG suffixes are constructed to maintain low perplexity, defeating filters; the required $\mathcal{R}_1$ defense is gradient-space adversarial training, not input-text filtering. \\
\bottomrule
\end{tabularx}
\end{table*}

\section{Action-Log Schema}
\label{app:actionlog}

A minimal action-log schema sufficient for trajectory attribution at L7. The fields below are not a contribution claim; they are presented to make Gap~5 (\tabref{tab:gaps}) concrete enough for engineering follow-up, and to anchor the recipes in \tabref{tab:recipes} that invoke \texttt{action\_log} as a control. If standardized analogously to syslog/RFC~5424 or OpenTelemetry traces, the schema would close the lowest-cost subset of the L7 gap.

\begin{propositionbox}{Action-Log Record (per agent step)}
\textbf{Identity \& trajectory.} \texttt{trace\_id} (UUIDv7, monotonic), \texttt{session\_id}, \texttt{step\_index}, \texttt{parent\_step\_id} (for branched plans). \\[2pt]
\textbf{Principal chain.} \texttt{developer\_id}, \texttt{operator\_id}, \texttt{user\_id}, \texttt{agent\_id}, \texttt{abom\_hash} (binds the trajectory to a deployment configuration; cf.\ \secref{sec:ecosystem}). \\[2pt]
\textbf{Decision context.} \texttt{prompt\_hash} (SHA-256 of full prompt at step), \texttt{memory\_reads} (list of $\langle\text{memory\_id}, \text{content\_hash}\rangle$), \texttt{tool\_outputs\_consumed} (list of $\langle\text{tool\_id}, \text{output\_hash}, \text{trust\_label}\rangle$). \\[2pt]
\textbf{Action \& effect.} \texttt{action\_type} $\in$ \{tool\_call, msg\_send, mem\_write, plan\_revise, terminate\}, \texttt{action\_payload\_hash}, \texttt{reversibility} $\in$ \{revertible, costly, irrevocable\}, \texttt{authorization\_path} (which principal sanctioned this action class). \\[2pt]
\textbf{Reasoning attestation (optional).} \texttt{cot\_hash} (SHA-256 of CoT trace, retained off-log for privacy if needed), \texttt{reasoning\_was\_visible\_to\_caller} (bool). \\[2pt]
\textbf{Integrity.} \texttt{record\_signature} (HMAC or transparency-log inclusion proof binding the record to an append-only ledger).
\end{propositionbox}

The schema enables trajectory reconstruction (via \texttt{parent\_step\_id} chains), action-class authorization audit (\texttt{authorization\_path} $\times$ \texttt{reversibility}), cross-session memory correlation (\texttt{memory\_reads} hashes matched to L3 write events), and supply-chain attribution (\texttt{abom\_hash}). All four governance gaps in \secref{sec:governance} presuppose records of approximately this granularity; without them, even a willing regulator could not enforce the requirement.


\end{document}